\newtheorem{hypothesis}{Hypothesis}[section]
\title{Subquadratic Approximation Algorithms for Separating Two Points with Objects in the Plane} %
\author{Jayson Lynch}{MIT}{jaysonl@mit.edu}{https://orcid.org/0000-0003-0801-1671}{} %
\author{Jack Spalding-Jamieson}{Independent}{jacksj@uwaterloo.ca}{https://orcid.org/0000-0002-1209-4345}{}
\authorrunning{J. Lynch and J. Spalding-Jamieson} %
\keywords{point separation, geometric intersection graphs, approximation algorithms, shortest-paths} %
\newcommand{\RR}{\mathbb{R}}
\newcommand{\ZZ}{\mathbb{Z}}
\newcommand{\defn}[1]{\textbf{#1}}
\DeclareMathOperator{\OO}{\mathcal{O}}
\DeclareMathOperator{\lO}{\widetilde{\mathcal{O}}}
\newcommand{\cC}{\mathcal{C}}
\newcommand{\SP}{\texttt{Time}_{\text{SSSP}}}
\newcommand{\jaysonl}[1]{}
\newcommand{\jack}[1]{}
\newcommand{\remove}[1]{}
\begin{document}

\maketitle

\begin{abstract}
The
(unweighted)
\defn{point-separation} problem asks,
given a pair of points $s$ and $t$ in the plane,
and a set of candidate geometric objects,
for the
minimum-size
subset of objects whose union blocks all paths from $s$ to $t$.
Recent work has shown that the point-separation problem can be characterized as a type of shortest-path problem in a geometric intersection graph within a special lifted space.
However, all known solutions to this problem essentially reduce to some form of APSP,
and hence take at least quadratic time, even for special object types.

Recent work has also given a conditional lower bound showing
that, for many classes of objects, there is no $O(n^{\frac32-\varepsilon})$-time
algorithm for this problem.
We improve this lower bound under a different fine-grained hypothesis,
to show that for the same classes of objects,
there is no $O(n^{2-\varepsilon})$-time
algorithm for this problem, rendering many of the prior algorithms tight
up to sub-polynomial factors.
We also show that this same lower bound applies to
multiplicative $(1+\varepsilon)$-approximations.

Our main results are positive:
We bypass this barrier by instead considering additive $+1$ approximations,
and multiplicative-additive $(1+\varepsilon,+1)$-approximations.
That is, we produce solutions of size $\text{OPT}+1$ and $(1+\varepsilon)\text{OPT}+1$, respectively.
In this paradigm,
we are able to devise algorithms that are fundamentally different from the APSP-based approach.
In particular,
we give Monte Carlo randomized additive $+1$ approximation algorithms running in $\lO(n^{\frac32})$ time for disks,
axis-aligned line segments and constant-complexity rectilinear polylines,
and
$\lO(n^{\frac{11}6})$ time for line segments and constant-complexity polylines.
We will also give deterministic multiplicative-additive approximation algorithms that,
for any value $\varepsilon>0$, guarantee a solution of size $(1+\varepsilon)\text{OPT}+1$
while running in
$\lO\left(n/\varepsilon\right)$ time
for disks,
axis-aligned line segments and constant-complexity rectilinear polylines,
and $\lO\left(n^{4/3}/\varepsilon\right)$ time for line segments and constant-complexity polylines.
\end{abstract}

\section{Introduction}
\label{sec:Introduction}

We say that two points $s$ and $t$ are \defn{separated}
by a set of objects $C$ if every path from $s$ to $t$
passes through some element of $C$.
The \defn{point-separation problem} asks,
for a given set of (weighted) objects $\cC$ and two points $s$ and $t$,
what is the minimum (weight) subset $C$ of $\cC$ separating $s$ and $t$?
An example of this problem can be found in \cref{fig:example-curves-1}. %

\begin{figure}[h]
\centering
\includegraphics[scale=0.60,page=1]{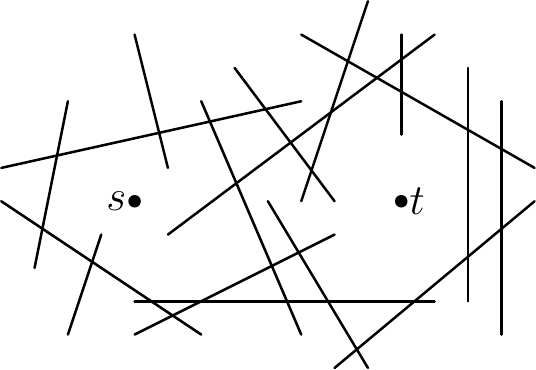}
\hspace{3em}
\includegraphics[scale=0.60,page=3]{simple-example}
\caption{An instance of the point-separation problem with a set of line segment objects (left), and a minimum subset of segments that separate $s$ and $t$ (right). The region of the plane reachable from $s$ without crossing the blocking objects (which does not include $t$) is shaded.}
\label{fig:example-curves-1}
\end{figure}

Point-separation captures a number of real-world scenarios,
particularly problems of detecting or preventing entrances/exits from a region.
For example,
in the case of a small isolated facility in the wilderness,
one may want to install proximity sensors of some kind
(each able to detect movement within a fixed radius)
from a set of candidate installation locations.
Radar sensors can be quite expensive, so it would be helpful to minimize the
number of them required to guarantee detection of anyone approaching.
This can be modelled as an instance of the point-separation problem:
The point $s$ is your facility, the point $t$ is at infinity
(or some point sufficiently far away),
and each candidate sensor location becomes a disk object with the given radius.

An algorithm for point-separation also forms a sub-routine in
a constant-factor approximation-algorithm for the barrier resilience
problem~\cite{KumarLSS21}, motivated by wireless-sensor network design.

Algorithmically, point-separation has been studied by a number of works,
with a few different computational models for representations of the objects~\cite{GibsonKV11,CabelloG16,CabelloM18,KumarLSS21,KumarSoCG2022,spalding2025separating}.
We are primarily interested in the model of \defn{specific object types}:
Each object is of a standard class (e.g., line segments, disks),
and given explicitly.
We will denote the number of planar object $n:|\cC|$ and give running times in terms of this.
Recently, Spalding-Jamieson and Naredla
described a framework obtaining the current fastest algorithms for all previously-studied models~\cite{spalding2025separating}.
Their framework uses a tool called the \emph{homology cover space},
which we will summarize in the next section.
In particular, they define a \defn{geometric intersection graph}
in this space
with two vertices per object,
for $2n$ vertices total.
That is,
a graph whose vertices correspond to connected objects
in the space, and whose edges correspond to pairwise intersections.
They then show that the optimum solution to the point-separation problem
corresponds exactly to finding the minimum of exactly $n$ specific
shortest-path distances in this graph (specifically, the distances between pairs corresponding to the same object),
and they then solve the all-pairs shortest-paths problem
to get these distances.
This direct use of the all-pairs shortest-path problem
inherently requires at least quadratic time
due to the size of its output.

In this work, we are primarily interested in the case of unweighted objects of specific types,
such as disks and line segments,
for which Spalding-Jamieson and Naredla gave a number of new results
by efficiently solving the resulting all-pairs shortest-path problems.
Many of these solutions run in overall times quite close to this quadratic time limit.

Spalding-Jamieson and Naredla also gave fine-grained (conditional) lower
bounds on point-separation for many classes of objects,
including
$\Omega\left(n^{2-\varepsilon}\right)$
lower bounds for many classes of \emph{weighted} objects,
and
$\Omega\left(n^{3/2-\varepsilon}\right)$ lower bounds for many classes
of \emph{unweighted} objects. Notably, both bounds apply for line segments and rectilinear polylines made up of at most $3$ sub-segments.

\subsection{Our Results}
\label{sec:Our Results}

We present numerous results
that we outline here.
First, we will give stronger fine-grained lower bounds for unweighted point-separation
and new fine-grained lower bounds for multiplicative approximations of unweighted point-separation.
Then, our main results are upper bounds that bypass these lower bounds. These are detailed in Table~\ref{tab:results}.
That is, we provide additive $+1$ approximations
and multiplicative-additive $+1$ approximations.
When combined with
algorithms for finding small solutions,
our additive $+1$ approximations
in turn also imply new multiplicative approximation schemes,
at least one of which is (essentially) tight.

Our results apply to a large number of object classes,
including several variations of polylines and polygons.
We say that the \defn{complexity} of a polyline or polygon
is the number of segments that make it up
(for example, a triangle has complexity $3$).
We are primarily concerned with polylines and polygons
of \emph{constant}-complexity,
since they exhibit helpful properties for our purposes.
For instance, a pair of such objects can be tested for intersection in constant time.

\paragraph*{Lower bounds}
While the bounds obtained by Spalding-Jamieson and Naredla for
unweighted objects are weaker than the ones they obtain for weighted objects,
we posit that this is primarily due to their use of
a weak fine-grained hypothesis.
In \cref{sec:mult-approx},
we will show that an improvement of their reduction implies that an exact algorithm
for unweighted point-separation among line segment objects
running in $\OO(n^{2-\varepsilon})$ time for any fixed $\varepsilon>0$
would imply the biggest breakthrough in several decades for the $k$-cycle detection problem.

More relevant to our below work on upper bounds:
We also give lower bounds for multiplicative approximations,
including a essentially tight lower bound for multiplicative approximations
on $\OO(1)$-complexity rectilinear line segments under a
fine-grained hypothesis,
where this lower bound is in terms of the time complexity for solving the $k$-cycle detection problem.
This lower bound matches our corresponding upper bound for multiplicative approximations,
which also relies on an oracle to solve the $k$-cycle detection problem.

\paragraph*{Upper bounds}
The main results of this work
are to show that we can bypass the quadratic-time lower bounds
by considering approximation algorithms
for unweighted point-separation.
Due to the above lower bound,
we primarily construct \emph{additive} approximations
and \emph{multiplicative-additive} approximations.
In particular, the additive approximations we consider are $+1$ additive,
so they are essentially the best-possible subquadratic approximation guarantees possible for point-separation.
Our results also imply subquadratic
(constant-factor) multiplicative approximations as a consequence.

We have a large number of results,
so rather than giving theorem statements here,
our complete set of results is given in \cref{tab:results},
and the corresponding theorem statements can be found in the specified sections.
The main results of this paper are the $\text{OPT}+1$
and $(1+\varepsilon)\text{OPT}+1$ approximation algorithms listed in the table.
These algorithms roughly decompose into four components:
\begin{itemize}
  \item As previously mentioned, Spalding-Jamieson and Naredla showed
    that point-separation can be reduced to the minimum of exactly $n$ shortest-path
    distances in a special graph~\cite{spalding2025separating}.
    We explain their reduction in more detail in \cref{sec:reduction}.
    Then, for our purposes, we show how to efficiently compute individual pairwise shortest-paths
    in this graph for a variety of object classes.
    The results are stated in \cref{sec:reduction},
    but the proofs are left to \cref{sec:sssp}.
    These algorithms are straightforward extensions of
    known techniques for geometric intersection graphs in the plane,
    so we have left the details to the appendix.
    Note that this is not a single result, but rather a collection of results
    for different object classes.
  \item In \cref{sec:paths}, we define the notion of \emph{well-approximated paths},
    and show how to compute them using pairwise shortest-path computations as a black box.
    At a high level, a well-approximated path has the property
    that, if an optimum solution uses some element of the path,
    then we will find a good approximate solution to the point-separation problem
    while constructing the path.
    We have the caveat that this may only apply if the optimum solution is
    somewhat small (depending on what kind of approximation we desire).
  \item If we permit Monte Carlo randomization, then we can find large solutions using
    random sampling, via a simple method described in \cref{sec:monte-carlo}.
  \item Lastly, in \cref{sec:div-conq},
    we show how to use well-approximated paths to perform a divide and conquer
    algorithm.
    At a high level, this algorithm takes a path $\pi$ in the plane from $s$ to $t$,
    and performs divide and conquer on the objects intersecting $\pi$,
    where the division is done using well-approximated paths.
    This divide and conquer method is similar to Reif's algorithm for
    $(s,t)$-min-cut in a planar graph~\cite{reif1983minimum}.
\end{itemize}

The remaining algorithms are derived in
\cref{sec:addk}
and
\cref{sec:mult-approx},
although these primarily leverage machinery created for the above.

\begin{table}
\centering
\renewcommand{\arraystretch}{1.3}
\begin{tabular}{|m{0.27\linewidth}|>{\centering\arraybackslash}m{0.16\linewidth}|>{\centering\arraybackslash}m{0.23\linewidth}|m{0.18\linewidth}|}
\hline
\textbf{Object Type} & \textbf{Running Time} & \textbf{Randomized?} & \textbf{Best-Known Exact}~\cite{spalding2025separating} \\
\hline
\multicolumn{4}{|c|}{\textbf{Approximation Guarantee:} $\mathbf{OPT+1}$ (\cref{sec:div-conq})} \\
\hdashline
Disk & $\OO(n^{3/2} (\log n)^2)$ & Monte Carlo (w.h.p.) & $\OO(n^{2} \log n)$ \\
\hdashline
Line segment & $\lO(n^{11/6})$ & Monte Carlo (w.h.p.) & $\OO(n^{7/3} \log^{1/3} n)$ \\
\hdashline
Rectilinear line segment & $\OO(n^{3/2} (\log n)^2)$ & Monte Carlo (w.h.p.) & $\OO(n^{2} \log \log n)$* \\
\hdashline
$\OO(1)$-complexity polyline & $\lO(n^{11/6})$ & Monte Carlo (w.h.p.) & $\OO(n^{7/3} \log^{1/3} n)$ \\
\hdashline
$\OO(1)$-complexity rectilinear polyline & $\OO(n^{3/2} (\log n)^2)$ & Monte Carlo (w.h.p.) & $\OO(n^{2} \log \log n)$ \\
\hline
\multicolumn{4}{|c|}{\textbf{Approximation Guarantee:} $\mathbf{(1+\varepsilon)\,OPT+1}$ (\cref{sec:div-conq})} \\
\hdashline
Disk & $\OO\left( \frac{n (\log n)^3}{\varepsilon} \right)$ & Deterministic & $\OO(n^{2} \log n)$ \\
\hdashline
Line segment & $\lO\left( \frac{n^{4/3}}{\varepsilon} \right)$ & Deterministic & $\OO(n^{7/3} \log^{1/3} n)$ \\
\hdashline
Rectilinear line segment & $\OO\left( \frac{n (\log n)^3}{\varepsilon} \right)$ & Deterministic & $\OO(n^{2} \log \log n)$ \\
\hdashline
$\OO(1)$-complexity polyline & $\lO\left( \frac{n^{4/3}}{\varepsilon} \right)$ & Deterministic & $\OO(n^{7/3} \log^{1/3} n)$ \\
\hdashline
$\OO(1)$-complexity rectilinear polyline & $\OO\left( \frac{n (\log n)^3}{\varepsilon} \right)$ & Deterministic & $\OO(n^{2} \log \log n)$ \\
\hline
\multicolumn{4}{|c|}{\textbf{Approximation Guarantee:} $\mathbf{OPT+k}$ (\cref{sec:addk})} \\
\hdashline
Disk & $\OO\left( \frac{n^2 \log n}{k} \right)$ & Deterministic & $\OO(n^{2} \log n)$ \\
\hdashline
Line segment & $\lO\left( \frac{n^{7/3}}{k} \right)$ & Deterministic & $\OO(n^{7/3} \log^{1/3} n)$ \\
\hdashline
Rectilinear line segment & $\OO\left( \frac{n^2 \log n}{k} \right)$ & Deterministic & $\OO(n^{2} \log \log n)$ \\
\hdashline
$\OO(1)$-complexity polyline & $\lO\left( \frac{n^{7/3}}{k} \right)$ & Deterministic & $\OO(n^{7/3} \log^{1/3} n)$ \\
\hdashline
$\OO(1)$-complexity rectilinear polyline & $\OO\left( \frac{n^2 \log n}{k} \right)$ & Deterministic & $\OO(n^{2} \log \log n)$ \\
\hline
\multicolumn{4}{|c|}{\textbf{Approximation Guarantee:} $\mathbf{(1+\frac{1}{k})\,OPT}$ (\cref{sec:mult-approx})} \\
\hdashline
Line segment & $\lO(n^{\frac43(2-\frac1{k-1})})$ & Monte Carlo (w.h.p.) & $\OO(n^{7/3} \log^{1/3} n)$ \\
\hdashline
Rectilinear line segment & $\lO(n^{2-\frac1{k-1}})$ & Monte Carlo (w.h.p.) & $\OO(n^{2} \log \log n)$ \\
\hdashline
$\OO(1)$-complexity polyline & $\lO(n^{\frac43(2-\frac1{k-1})})$ & Monte Carlo (w.h.p.) & $\OO(n^{7/3} \log^{1/3} n)$ \\
\hdashline
$\OO(1)$-complexity rectilinear polyline & $\lO(n^{2-\frac1{k-1}})$ & Monte Carlo (w.h.p.) & $\OO(n^{2} \log \log n)$ \\
\hline
\end{tabular}
\caption{Approximation algorithms for the point-separation problem with restricted objects.
Note that we allow polylines
to include or exclude their interior regions (if they have any), and mixing the two types is permitted.
Note that the time complexity for the multiplicative approximations in
\cref{sec:mult-approx}
follow from a known time complexity for $k$-cycle detection,
and would be improved by progress on that problem.
}
\label{tab:results}
\end{table}

\section{Reduction to Shortest-Paths}
\label{sec:reduction}

In this section, we outline the framework of
Spalding-Jamieson and Naredla~\cite{spalding2025separating},
which reduces the point-separation problem to a type of shortest-path problem
in a geometric intersection graph embedded in the space they call the \emph{homology cover}.
We will use some slight simplifications to ease presentation.
At the end, we will also add some new tools for our purposes in this work.

\paragraph*{A simplifying assumption}
Suppose there is a single object $c\in\cC$
that separates $s$ and $t$.
This also includes objects that contain $s$ or $t$.
For all of the object types that we will study
(disks, line segments, and constant-complexity polylines),
it is easy to check in linear time
whether any single such object separates $s$ and $t$.
Henceforth,
we will assume without loss of generality that
the set of candidate objects $\cC$
contains no such object.
Spalding-Jamieson and Naredla
also show that this assumption
can be made in a more general weighted setting~\cite{spalding2025separating}.

\paragraph*{The homology cover}
We now (informally) describe the homology cover space.
Let $\pi$ be the line segment $\overline{st}$.
The set $\RR^2\setminus\{s,t\}$
is the plane with punctures at $s$ and $t$.
The homology cover space
consists of two copies $P_{-1},P_1$
of $\RR^2\setminus\{s,t\}$,
where the path $\pi\setminus\{s,t\}$
in each acts as a portal.
A path entering the portal in $P_i$
exits the portal in $P_{-i}$
at the same point.
For a point $p$ in $\RR^2\setminus\{s,t\}$,
there are exactly two corresponding points $p^{-1},p^1$
in the homology cover space: One in $P_{-1}$,
and one in $P_1$.
Closed planar curves passing through $p$
with an assigned direction
can be put in a natural bijection with curves in the
homology cover space starting at $p^{-1}$
and ending at either $p^1$ or $p^{-1}$.
Importantly, such curves end at $p^1$
if and only if they separate $s$ and $t$,
since this indicates how many times
such planar curves pass through $\pi$.
This is very similar (and related)
to a well-known characterization for testing whether a point lies inside a polygon.
See
\cref{fig:homology-cover-projection}
for examples of two closed planar curves,
one separating $s$ and $t$,
and one not separating $s$ and $t$.
A more formal description of the homology cover space as a manifold
can be found in
\cref{sec:manifold},
or in the work of Spalding-Jamieson and Naredla~\cite{spalding2025separating}.

\begin{figure}[h]
    \centering
    \includegraphics[scale=0.75,page=6]{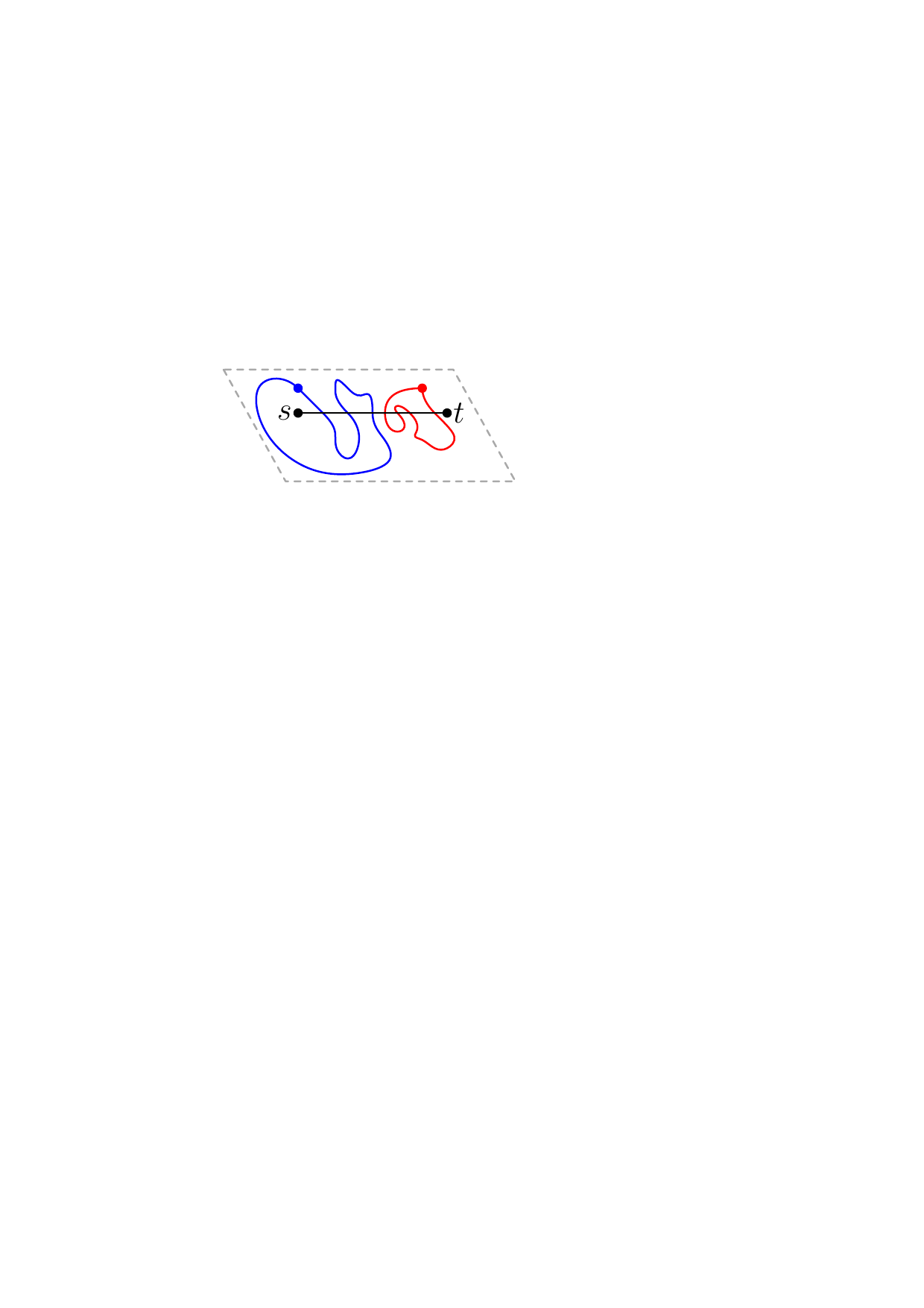}
    \caption{Visualization of two curves in the homology cover,
    and their projections onto the plane.
    One of the resulting projected curves separates $s$ and $t$,
    and the other does not.}
    \label{fig:homology-cover-projection}
\end{figure}

\begin{figure}[t]
    \centering
    \includegraphics[width=0.3\textwidth,page=12]{homology-cover-projection}
    \hspace{1em}
    \includegraphics[width=0.3\textwidth,page=17]{homology-cover-projection}
    \hspace{1em}
    \includegraphics[width=0.3\textwidth,page=22]{homology-cover-projection}
    \caption{Visualization of objects in the homology cover (left),
    a path $D_c$ through the intersection graph in the homology cover,
    the corresponding set of planar objects $F_c$ (middle, top and bottom, respectively),
    and an induced path whose projection into the plane separates $s$ and $t$ (right).}
    \label{fig:homology-cover-objects-projection}
\end{figure}

\paragraph*{The geometric intersection graph in the homology cover}
Now we describe the geometric intersection graph within the homology cover that we will use.
We will call this graph $\overline G$.
The vertices of a geometric intersection graph correspond to geometric objects,
so in our case we define a set of objects $\overline{\cC}$
in the homology cover based on the given planar objects $\cC$.

For each object $c\in\cC$,
fix a \defn{canonical point} $x_c$.
This point induces exactly two points $x_c^{-1}$ and $x_c^{1}$ in the homology cover,
one in $P_{-1}$ and one in $P_1$.
In some sense, these points will be ``antipodal''.
We assume the objects in $\cC$ are (path-)connected,
so for any point $y$ in $c$,
there is a path $p_y$ in $c$ from $x_c$ to $y$.
Such a path also induces a path starting from $x_c^{i}$
in the homology cover space (for each $i\in\{-1,1\}$).
We let $c^i$ be the object in the homology cover space
defined as the set of endpoints from these paths.
Intuitively, $c^i$ is the object with the same \emph{shape} as $c$
that is induced by the point $x_c^i$.
Note that $|\overline\cC|=2|\cC|$.
See \cref{fig:homology-cover-objects-projection}
for examples of objects in the homology cover space.

Since our objects are path-connected,
any two objects $c^i,d^j\in\overline\cC$
(with $i,j\in\{-1,1\}$)
intersect if and only if
there is a path $p_{c^i,d^j}$
from $x_c^i$ to $x_d^j$
contained entirely in $c^i\cup d^j$.
Let $p_{c,d}$ be the corresponding planar
path from $x_c$ to $x_{d}$
contained entirely in $c\cup d$.
Assume that $p_{c,d}$ is compact\footnote{Not every curve is compact. The topologist's sine curve is the standard example. However, in the case that our objects themselves are defined as closed curves (or have boundaries defined as closed curves), we may always assume that the curve $p_{c,d}$ is compact. This holds for all the object classes we consider.}
and that it never intersects $\overline{st}$
without crossing,
and further assume that it crosses $\overline{st}$
a finite number of times
(see the work of Spalding-Jamieson and Naredla~\cite{spalding2025separating}
for a more careful handling of degenerate cases).
Then, any two objects $c^i$ and $d^j$ intersect
if and only if the following two conditions hold:
\begin{itemize}
    \item $c$ and $d$ intersect in the plane, so there is a planar path $p_{c,d}$ as described.
    \item The number of times $T$ that the path $p_{c,d}$ crosses $\overline{st}$
    has $T\equiv i-j\pmod2$
\end{itemize}

Observe that $\overline{G}$ has an inherent symmetry:
Let $c,d\in\cC$ be objects,
and let $i,j\in\{-1,1\}$.
Then $c^{i}d^{j}\in E(\overline G)$
if and only if
$c^{j}d^{i}\in E(\overline G)$.

\remove{
\begin{figure}[h]
    \centering
    \begin{minipage}[t]{0.47\textwidth}
        \centering
        \includegraphics[scale=0.5,page=8,valign=c]{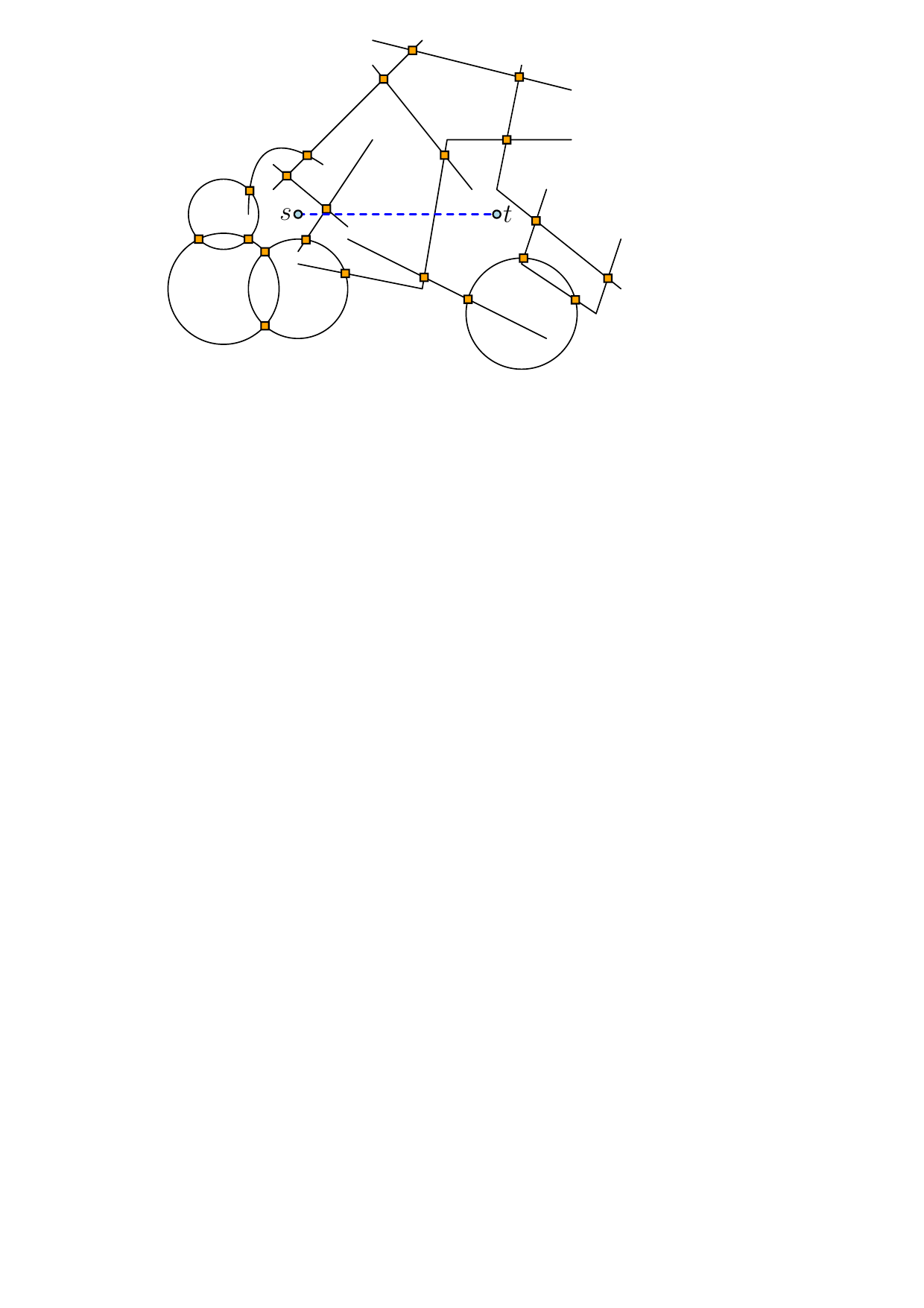}
        \subcaption{The intersection graph with the path $\pi=\overline{st}$.}
    \end{minipage}
    \hfill
    \begin{minipage}[t]{0.47\textwidth}
        \centering
        \includegraphics[scale=0.5,page=9,valign=c]{example-curves-small-modified}
        \subcaption{The intersection graph after deleting the edges
            whose underlying paths cross $\pi$ an odd number of times.}
    \end{minipage}\\
    \begin{minipage}[t]{0.30\textwidth}
        \centering
        \includegraphics[width=\textwidth,page=11,valign=b]{example-curves-small-modified}
        \subcaption{Two copies of the intersection graph are connected with ``crossing'' edges (in purple).}
    \end{minipage}
    \hfill
    \begin{minipage}[t]{0.30\textwidth}
        \centering
        \includegraphics[width=\textwidth,page=12,valign=b]{example-curves-small-modified}
        \subcaption{A shortest-path in the homology cover between two copies of an object (a path $D_c$).}
        \label{fig:example-curves-isect-cover:path}
    \end{minipage}
    \hfill
    \begin{minipage}[t]{0.30\textwidth}
        \centering
        \includegraphics[width=\textwidth,page=15,valign=b]{example-curves-small-modified}
        \subcaption{The corresponding set of objects that separate $s$ and $t$ (a set $F_c$).}
        \label{fig:example-curves-isect-cover:sep}
    \end{minipage}
    \caption{How the intersection graph can be transformed into the intersection graph in the homology cover.%
    \protect\permissionmark}
    \label{fig:example-curves-isect-cover}
\end{figure}
}

\paragraph*{Optimal and approximate solutions to point-separation}
We now highlight some properties of the homology cover and $\overline{G}$ that will be useful for our approximation algorithms.
First, we have the following important property of paths
between vertex pairs
in $\overline{G}$ corresponding to the same planar object:
\begin{observation}[\cite{spalding2025separating}]
\label{obs:path-implies-separation}
For a set of planar objects $\cC$
and planar points $s,t$,
let $\overline{\cC}$ be the corresponding
set of objects in the homology cover space.
Let $c\in\cC$,
and let $c^{-1}$ and $c^1$ be the corresponding objects
in $\overline{\cC}$.
Let $\overline{G}$ be the geometric intersection graph of $\overline{\cC}$ in the homology cover space.
Let $S$ be the set of vertices along some path in $\overline{G}$
from $c^{-1}$ to $c^1$.
Then,
the set of planar objects inducing the vertices in $S$
separates $s$ and $t$.
\end{observation}

See
\cref{fig:homology-cover-objects-projection} (middle and right)
for an example of such a path,
its corresponding planar set of objects,
and a visualization of how the union of such objects
separates $s$ and $t$ in the plane.
While this observation suggests a relationship
between $\overline{G}$ and the point-separation problem,
it does not offer a complete characterization.
To obtain such a characterization, we must consider \emph{shortest}
paths:
\begin{proposition}[\cite{spalding2025separating}]
\label{prop:characterization}
For a set of objects $\cC$,
the graph $\overline G$,
and an object $c\in\cC$,
let $D_c$ denote the elements along an arbitrary shortest-path in $\overline{G}$ from
$c^{-1}$ to $c^1$,
and
let $F_c\subset\cC$ denote the corresponding planar objects
that separate $s$ and $t$.
If $D_c$ is the shortest such path over all $c\in\cC$,
or (equivalently) $F_c$ is the smallest such set over all $c\in\cC$,
then $F_c$ is an optimal solution to the point-separation problem.
Moreover,
for any $c\in\cC$,
and for $d\in F_c$,
we get
$|F_d|\leq|F_c|$ and $|D_d|\leq|D_c|$.
\end{proposition}

The last property guarantees that
any element of the optimum solution itself
induces an optimum solution.

We are
interested in approximation algorithms for point-separation,
so we point out the following important consequence of this result:
\begin{corollary}
\label{cor:basic-approx}
Let $c,d\in \cC$ be objects,
and let $i,j\in\{-1,1\}$.
Suppose that $c^i$ and $d^j$ are at distance $r$ in the intersection graph $\overline G$ in the homology cover.
Then $|D_d|\leq|D_c|+2r$,
and $|F_d|\leq|F_c|+r$.
\end{corollary}

This follows from the symmetry of $\overline{G}$.
That is,
there is a path of length $r$
from $c^i$ to $d^j$,
so there is also a path of length $r$
from $c^{-i}$ to $d^{-j}$,
so the paths from $d^j$ to $c^i$,
from $c^i$ to $c^{-i}$, and from $c^{-i}$ to $d^{-j}$
can be concatenated.
The factor-of-$2$ difference between
the bounds results from the fact that two of these paths
in $\overline{G}$
are induced by the same set of planar objects.
See \cref{fig:homology-cover-objects-projection-approx}
for a simple example.

\begin{figure}[h]
    \centering
    \includegraphics[width=0.4\textwidth,page=24]{homology-cover-projection}
    \caption{A path through the intersection graph in the homology cover
    that contains the path in \cref{fig:homology-cover-objects-projection} (middle)
    as a sub-path, and the corresponding set of planar objects.
    There are $2$ additional objects in the path in the homology cover,
    but only $1$ additional object in the planar set.}
    \label{fig:homology-cover-objects-projection-approx}
\end{figure}

\subsection{Shortest-Path Algorithms, Notation and Time Complexity}

Many fast shortest-path algorithms
are widely known for a number of geometric intersection graphs in the plane.
For example,
it has recently been shown that
a single-source shortest-path in a disk graph with $n$ vertices
can be found in $\OO(n\log n)$ time~\cite{brewer2025optimal,deberg2025algorithm},
even though the graph itself may have $\Theta(n^2)$ edges.
Spalding-Jamieson and Naredla showed that many algorithms
for \emph{all-pairs} shortest-paths extend to the same object types
in the homology cover~\cite{spalding2025separating}.
However, for this work,
we will require \emph{single-source} shortest-path algorithms.
We will prove the following results in \cref{sec:sssp}:

\begin{theorem}
\label{thm:disk}
For a set of disks $\cC$ in the plane,
and points $s$ and $t$,
let the geometric intersection graph in the homology cover
be denoted $\overline G$.
Then,
there is an algorithm
that computes a single-source shortest-path tree within $\overline G$ in $\OO(n\log n)$ time.
\end{theorem}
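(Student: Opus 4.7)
The plan is to follow the skeleton laid out just before the theorem statement: for the de Berg--Cabello algorithm to port over to $\overline G$, it suffices to produce (a) a clique-based contraction of $\overline G$ with $\OO(n\log n)$ post-contraction edges in $\OO(n\log n)$ time, and (b) a bichromatic intersection-testing routine for two sets $B, R \subseteq \overline{\cC}$ of lifted disks, running in $\OO((|B|+|R|)\log|R|)$ time. Given these two ingredients, the shortest-path computation is a black-box invocation of the de Berg--Cabello algorithm and inherits the $\OO(n\log n)$ bound.

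For (a), the plan is to lift the planar clique-based contraction supplied by \cref{lemma:clique-contract}. Each stabbed clique of $G$ comes with a planar stabbing point $p$; since the common intersection of the clique's disks has nonempty interior, $p$ can be perturbed off $\overline{st}$ without cost. The point $p$ then has two preimages $p^{-1} \in P_{-1}$ and $p^{1} \in P_{1}$, and for every disk $c$ in the clique, the lift $c^b$ lying in the same sheet as $p^b$ contains $p^b$. Thus each planar stabbed clique yields at most two stabbed cliques of $\overline G$, at most doubling the total edge count. This handles every edge of $\overline G$ whose endpoints share an indicator bit. The remaining ``crossing'' edges $c_1^{b_1} c_2^{b_2}$ (with $b_1 \neq b_2$) arise only from pairs of disks whose union meets $\overline{st}$; since disks are convex, each such disk $c$ meets $\overline{st}$ in a sub-segment $[a_c, b_c]$, and any crossing edge forces these sub-segments to overlap. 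Sweeping the interval endpoints groups the crossing-producing disks into a linear number of stabbed cliques, stabbed by points on $\overline{st}$ lifted to each sheet with the appropriate indicator bit. Combining the lifted planar contraction with this crossing contraction keeps the total post-contraction size at $\OO(n\log n)$, computable in $\OO(n\log n)$ time (dominated by \cref{lemma:clique-contract} and sorting interval endpoints).

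For (b), the plan is to decompose each lift $c^b$ into its two planar disk-half pieces (one in each sheet, glued through the portal). Any intersection between two lifts falls into a constant number of geometrically elementary cases: either the intersection occurs entirely within a single sheet $P_b$ (reducing to standard planar bichromatic disk-intersection testing on the half-disks in that sheet, solvable in $\OO((|B|+|R|)\log|R|)$ time~\cite{chan2016all,ChanS17,klost2023algorithmic}), or it occurs through a shared point on $\overline{st}$ (reducing to bichromatic interval-stabbing on the sub-segments $[a_c,b_c]$, with indicator parity resolved locally from which disk-halves the canonical points inhabit). Running each subroutine and taking the union across the constant number of cases yields the required bichromatic test in the claimed time. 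The main obstacle is the portal book-keeping: a single planar disk whose interval on $\overline{st}$ is long could, a priori, contribute to many crossing cliques or induce duplicated intersections. The key point is that the Helly-like behavior of sub-intervals on a line confines such a disk to $\OO(1)$ lifted cliques per planar clique containing it, and keeps the crossing contraction linear in size, which is precisely what preserves the overall $\OO(n\log n)$ bound.
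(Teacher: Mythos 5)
You've correctly identified the two ingredients needed to port the de~Berg--Cabello framework: a clique-based contraction of $\overline G$ with $\OO(n\log n)$ edges, and a bichromatic intersection test for lifted disks. Lifting each planar stabbed clique $Q$ (stabbed at $p$) to the two cliques $Q^{-1}, Q^1$ of lifts containing $p^{-1}$ and $p^1$ is exactly what the paper does; this gives a valid partition of $\overline\cC$, and any contracted edge $Q^{b_1}R^{b_2}$ of $\overline H$ projects to an edge $QR$ of the planar contraction $H$ (or to $Q=R$), so there are only $\OO(n\log n)$ \emph{candidate} contracted edges. At that point the paper stops: for each candidate pair it directly decides whether the two lifted flowers intersect, reusing the boundary-crossing-point enumeration that de~Berg and Cabello already perform (plus one segment-parity check in the containment case). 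No further clique construction is needed.

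Your proposal instead tries to split contracted edges into ``same-bit'' and ``crossing'' cases and handle the latter by an extra interval-sweep ``crossing contraction,'' and this is where it breaks. First, the bits of $Q^{b_1}$ and $R^{b_2}$ are defined relative to two \emph{different} stabbing points $p_Q$ and $p_R$, so ``$b_1\neq b_2$'' has no intrinsic meaning across cliques and does not isolate any well-defined subset of contracted edges. Second, the claim that a crossing edge forces the intervals $c_1\cap\overline{st}$ and $c_2\cap\overline{st}$ to overlap is false: two disks whose centers lie on opposite sides of $\overline{st}$ can intersect in a lens confined strictly to one side of $\overline{st}$ while each disk still crosses $\overline{st}$ in its own interval, with those intervals disjoint; an odd-crossing path between the canonical points exists, but an interval sweep along $\overline{st}$ never groups that pair. (It is also possible for exactly one of the two disks to meet $\overline{st}$.) Third, adding sweep-based stabbed cliques to the lifted ones would violate the partition property a clique-based contraction must satisfy. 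For ingredient (b), decomposing a lift into two half-disks glued through the portal is plausible in spirit, but half-disks are not disks, so the cited disk-detection subroutines do not apply as-is; the paper sidesteps this entirely by invoking the existing homology-cover intersection-detection routine of Spalding-Jamieson and Naredla rather than rederiving it.
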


\begin{theorem}
\label{thm:sp-linesegs}
For a set $\cC$ of line segments and/or constant-complexity polylines,
and a pair of points $s$ and $t$,
let $\overline G$ be the corresponding intersection graph in the homology cover.
Then,
single-source shortest-paths in $\overline{G}$ can be computed in $\lO(n^{\frac43})$ time,
or $\lO(n)$ time if the segments/polylines are rectilinear.
\end{theorem}

\begin{theorem}
\label{thm:sp-linesegs-rectilinear}
For a set $\cC$ of rectilinear line segments and/or constant-complexity rectilinear polylines,
and a pair of points $s$ and $t$,
let $\overline G$ be the corresponding intersection graph in the homology cover.
Then,
single-source shortest-paths in $\overline{G}$ can be computed in $\OO(n\log n)$ time.
\end{theorem}

All the point-separation algorithms we will describe in this paper
will make use of such single-source shortest-path algorithms as a black-box.
Every such computation will measure the distance from one copy of an object to another,
so we will consistently use the notation defined in
\cref{prop:characterization}: $D_c$ denotes the sequence of elements along the shortest-path in $\overline G$, and $F_c$ denotes the corresponding set of objects separating $s$ and $t$.

Furthermore, in order
to apply SSSP algorithms in a black-box manner,
we will henceforth measure the time complexity of our algorithms in terms of the
number of single-source shortest-path computations performed.
Let $\SP(n')$ denote the time to compute $D_c$ (and, consequently, $F_c$)
for a subgraph of $\overline{G}$ induced by $n'$ objects from $\overline\cC$.
Note that the subsets of objects of $\overline\cC$ we will use will
always contain both or neither element in a pair of corresponding objects.
For instance, if $\cC$ consists only of disk objects, $\SP(n')\in\OO(n'\log n')$.
We will also assume that $\SP(n')$ is super-additive.
That is, for problem sizes $a,b\geq0$, $\SP(a)+\SP(b)\leq\SP(a+b)$.
This super-additivity property will be useful in analyzing a divide and conquer approach later.

\section{Well-Approximated Paths}
\label{sec:paths}

In this section, we will describe and construct
one of our main tools.
Let $\cC$ be a set of planar objects,
let $s,t\in\RR^2$,
and let $\overline{G}$ be the intersection graph in the
homology cover of the corresponding set $\overline{\cC}$.
In order to \emph{exactly} solve the point-separation
problem, it would suffice to compute the minimum value of $|F_c|$
among all $c\in\cC$.
In order to \emph{approximately} solve the point-separation
problem,
it would suffice to find some $c$
so that for every $d\in\cC\setminus\{c\}$,
$(1+\varepsilon)\cdot|F_d|\geq|F_c|$,
where $(1+\varepsilon)\geq1$.
However, we will not directly obtain multiplicative approximations for point-separation.
Rather, we will find \emph{paths} in $\overline{G}$
along which we obtain multiplicative $(1+\varepsilon)$-approximations
for every \emph{sufficiently small} $|F_d|$ value in the path.
The use of paths will play a key role in a divide and conquer method later.
As a consequence,
we can obtain multiplicative-additive $((1+\varepsilon),+1)$-approximations
for all sufficiently small $|F_d|$ values for $d$ in the $1$-hop neighbourhood of such paths
via \cref{cor:basic-approx}.
We have this condition that the values we bound are sufficiently small
so that we can later leverage it to improve some of our final time complexities by a log factor,
by using a different approach to find a large optimal solution.

For an object $c\in\cC$,
a value $(1+\varepsilon)\geq1$,
and a threshold value $k\geq1$,
we say that a \defn{$(1+\varepsilon,k)$-well-approximated path}
is a path in $\overline G$ from $c^{-1}$ to $c^1$
consisting of vertices $c^{-1}=c_0^{b_0},c_1^{b_1},\dots,c_m^{b_m}=c^1$
(where $b_0,\dots,b_m\in\{-1,1\}$),
and a subset of objects $C\subset\{c_0,\dots,c_m\}$
so
that $\min_{i\in\{0,\dots,m\}}(1+\varepsilon)\cdot|F_{c_i}|\geq\min\{(1+\varepsilon)\cdot k,\min_{d\in C}|F_d|\}$.
We call the set $C$ the \defn{net} of the $(1+\varepsilon,k)$-well-approximated path,
since it is analogous to an $\epsilon$-net for a range space.

This structure is useful for the following reason:
If the net $C$ is sufficiently small, then
we efficiently obtain an approximation for \emph{all}
values $F_{c_i}$
by computing the values $F_d$ for $d\in C$.
Note that, in order to get this guarantee,
we only need the smallest value $|F_{d}|$
from the net, not the entire net.
However, the intuition of nets will remain useful for defining and working through the algorithm.

Unfortunately, computing $(1+\varepsilon,k)$-well-approximated paths
with small nets is not straightforward.
We will employ a recursive algorithm that
incrementally and simultaneously computes
the path and the net using
a trial and error approach.
At a high level,
this recursive algorithm will take an object $c$,
and compute $D_c$.
Then, if $D_c$ is itself sufficiently short,
its vertices can be used as the net directly.
Otherwise, it picks a small candidate net
consisting of well-spaced vertices along $D_c$.
With a simple test, we can check if
this potential net is guaranteed to
provide the required approximation guarantees
via the bound given in
\cref{cor:basic-approx}.
If it does, then we have found a net.
However, if it doesn't, then we can extract out some partial progress
and carefully recurse.

\begin{algorithm}[h]
\caption{Computing a $(1+\varepsilon,k)$-well-approximated path}
\label{alg:computepath}
\begin{algorithmic}[1]
\Require Object $c\in\cC$,
useful solution size threshold $k$,
approximation factor $1<(1+\varepsilon)\leq2$
\Ensure Output consists of well-approximated path $P$ with net $C$, and smallest $F_c$ for $c\in C$
\Procedure{WellApproximatedPath}{$c, \overline{G}, k, \varepsilon$}
  \State Compute $D_c$ and $F_c$ in $\overline{G}$ \Comment{One SSSP computation}
  \State $m \gets |F_c|$ \Comment{Note: $1 \leq m \leq n$}
  \If{$m \leq \frac4\varepsilon$} \Comment{\textbf{Case 1}: Small enough to use trivial net}
    \For{each object $c' \in F_c$} \Comment{Uses $\OO\left(\frac1\varepsilon\right)$ SSSP computations}
      \State Compute $F_{c'}$ in $\overline{G}$.
    \EndFor
    \State \Return path $P \gets D_c$, net $C\gets F_c$, and smallest $F_{c'}$ for $c'\in C$
  \Else \Comment{\textbf{Case 2}: Construct a sparse candidate net}
    \State Let $r \gets \left\lfloor\left(1-\frac1{1+\varepsilon}\right)\frac m2\right\rfloor$.
    \State Denote $D_c = [c^{-1}=v_0^{b_0},\, \dots,\, v_m^{b_m}=c^1]$.
    \State Let $s_j$ denote $\min\{j\cdot r,m\}$ for each $0\leq j\leq \lceil\frac mr\rceil$.
    \State Compute $D_{v_{s_j}},F_{v_{s_j}}$ for each $0\leq j\leq\lceil\frac mr\rceil$.
    \Comment{$\OO\left(\frac mr\right)=\OO\left(\frac1\varepsilon\right)$ SSSP computations}
    \If{every $|F_{v_{s_j}}|\geq \min\{\frac m2, k+r\}$} \Comment{\textbf{Case 2a}: Candidate net sufficient}
      \State \Return path $P \gets D_c$, net $C\gets\{v_{s_j}\}_{0\leq j\leq\lceil\frac mr\rceil}$,
      and smallest $F_{c'}$ for $c'\in C\cup\{c\}$
    \Else \Comment{\textbf{Case 2b}: Candidate net may be bad, recurse on first potential violation}
      \State Let $\ell$ be the smallest index where $|F_{v_{s_\ell}}|<\min\left\{\frac m2,k+r\right\}$.
      \State Let $P', C', F_{c'} \gets \Call{WellApproximatedPath}{v_{s_\ell}, \overline{G}, k, \varepsilon}$.
      \State Denote $P' = [u_0,\dots,u_p]$.
      \State Assume labelling $u_0=v_{s_\ell}^{b_{s_\ell}},u_p=v_{s_\ell}^{-b_{s_\ell}}$. \Comment{Requires symmetry}
      \State Let $P \gets [c^{-1}=v_0^{b_0}, v_1^{b_1},\dots, v_{s_\ell}^{b_{s_\ell}}=u_0,\dots,u_p=v_{s_\ell}^{-b_{s_\ell}}, \dots,v_1^{-b_1},v_0^{-b_0}=c^1]$.
      \State \Return concatenated path $P$, net $C\gets C'\cup\{v_{s_j}\}_{0\leq j\leq l}$,
      smallest $F_{c'}$ for $c'\in C$.
    \EndIf
  \EndIf
\EndProcedure
\end{algorithmic}
\end{algorithm}

\begin{figure}
    \centering
    \begin{minipage}[t]{0.46\textwidth}
        \centering
        \includegraphics[scale=0.36,page=1]{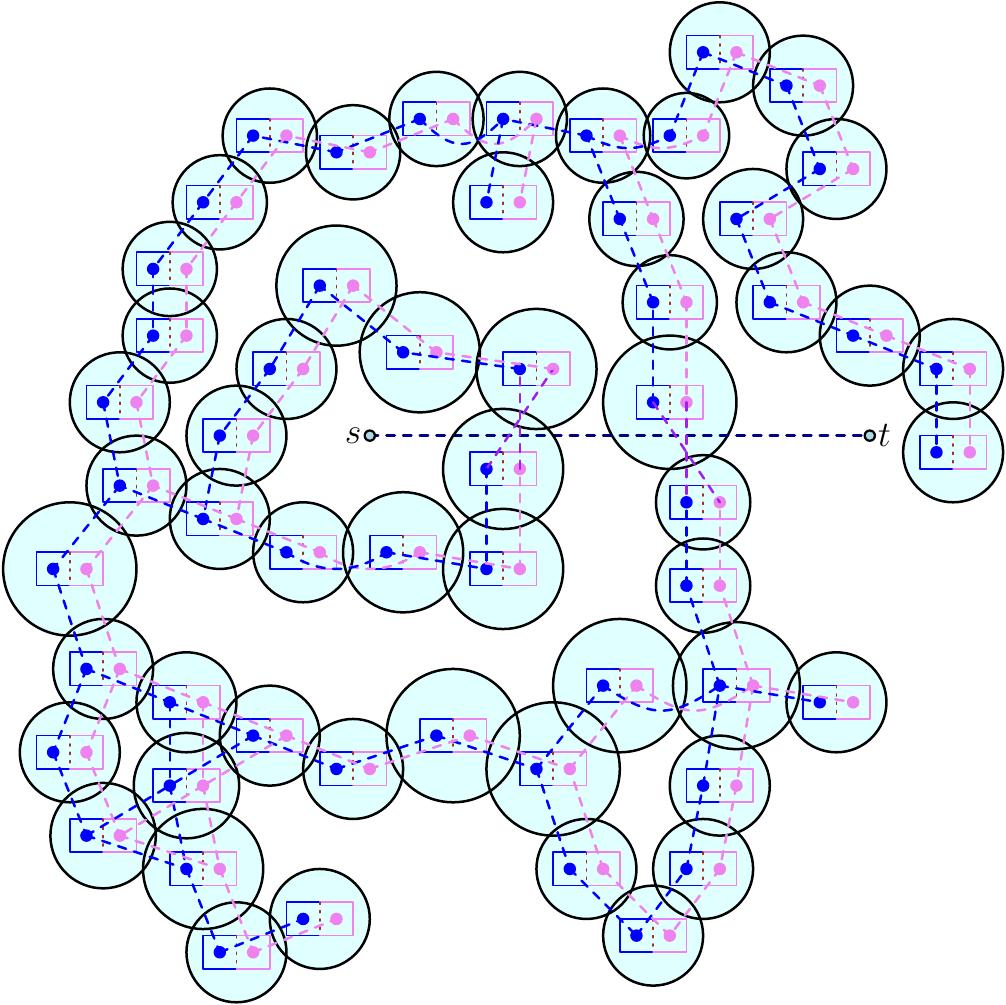}
        \subcaption{The intersection graph $\overline{G}$ in the homology cover,
        using disk centres as canonical points.}
    \end{minipage}
    \hfill
    \begin{minipage}[t]{0.46\textwidth}
        \centering
        \includegraphics[scale=0.36,page=2]{divpaths}
        \subcaption{A shortest-path $D_{c}$.}
    \end{minipage}

    \begin{minipage}[t]{0.46\textwidth}
        \centering
        \includegraphics[scale=0.36,page=3]{divpaths}
        \subcaption{Case 2: A well-spaced candidate net is chosen along $D_{c}$,
        and $D_{c'}$ is computed for each candidate net object $c'$.}
    \end{minipage}
    \hfill
    \begin{minipage}[t]{0.46\textwidth}
        \centering
        \includegraphics[scale=0.36,page=4]{divpaths}
        \subcaption{Case 2b: Recurse on the first $c'$ not providing guarantees.}
    \end{minipage}

    \begin{minipage}[t]{1.0\textwidth}
        \centering
        \includegraphics[scale=0.36,page=6]{divpaths}
        \subcaption{$D_c$ is truncated and then the concatenation of the truncated $D_c$, the recursively found path (in this case $D_{c'}$), and the symmetric path to the truncated component of $D_c$ is returned as the final path.}
    \end{minipage}

    \caption{\Cref{alg:computepath} applied with one recursive call.}
    \label{fig:divpaths}
\end{figure}

The detailed form of the algorithm is given as
\cref{alg:computepath}.
See \cref{fig:divpaths} for an example execution.
We will now prove a series of claims about
\cref{alg:computepath}.

\begin{lemma}
\label{lemma:alg-depth}
\Cref{alg:computepath} terminates at depth $\OO\left(\max\left\{1,\log\left(\min\{m,k+\frac{\varepsilon n}2\}\right)-\log\left(\frac4\varepsilon\right)\right\}\right)$.
\end{lemma}

\begin{proof}
We prove this lemma inductively.
Let $D(m)$ be the depth with $m=|F_c|$.

In cases $1$ and $2a$, the algorithm terminates without any recursive calls,
so $D(m)=1$.

In case $2b$, since $m\leq n$,
it is guaranteed that $|F_{v_{s_\ell}}|<\min\left\{\frac m2,k+\frac{\varepsilon n}2\right\}$.
Hence, $D(m)\leq1+D\left(\min\left\{\frac m2,k+\frac{\varepsilon n}2\right\}\right)$.
The depth given in the lemma statement solves this recurrence.
\end{proof}

\begin{lemma}
\label{lemma:alg-num-calls-per-level}
At each level of recursion, \cref{alg:computepath} makes $\OO\left(\frac1\varepsilon\right)$ SSSP computations.
\end{lemma}

\begin{proof}
We need only derive the fact that $\frac mr\in\OO\left(\frac1\varepsilon\right)$:
$$\frac mr
=\frac{m}{\left\lfloor\left(1-\frac1{1+\varepsilon}\right)\frac m2\right\rfloor}
\leq\frac{2m}{\left(1-\frac1{1+\varepsilon}\right)\frac m2}
=\frac{4}{1-\frac1{1+\varepsilon}}
=\frac{4(\varepsilon+1)}{\varepsilon}\leq\frac8\varepsilon.$$
where we use the assumption that $\varepsilon+1\leq2$ in the last step.
\end{proof}

Note that,
under the assumption that these SSSP computations output the path itself,
the time complexity of each level is also dominated by the SSSP computations.

\begin{proposition}
\label{prop:well-approximated-paths}
\Cref{alg:computepath} uses $\OO\left(\frac1\varepsilon\max\left\{1,\log\min\{m,k+\frac{\varepsilon n}2\}-\log\frac4\varepsilon\right\}\right)$ SSSP computations
to compute a $(1+\varepsilon,k)$-well-approximated path from $c^{-1}$ to $c^1$ in $\overline{G}$.
\end{proposition}

\begin{proof}
The number of SSSP computations follows from
\cref{lemma:alg-depth} and \cref{lemma:alg-num-calls-per-level}.

We will apply
\cref{cor:basic-approx} to prove the approximation guarantee.
Specifically, the net we use will be exactly the set of objects $d$ for
which we computed $F_{d}$ at any point during the algorithm.
It suffices to prove that,
for each object $c_i$ in the final path,
there is some other object $d$ in the final path
for which we computed $F_{d}$,
where
$(1+\varepsilon)\cdot|F_{c_i}|\geq\min\{(1+\varepsilon)\cdot k,|F_d|\}$.
We refer to this as the \emph{approximation guarantee}
for $c_i$.

In case 1, the approximation guarantee is satisfied trivially since we computed
$F_{c'}$ for all the objects $c'$ in the path.

In case 2a, every $|F_{v_{s_j}}|\geq\min\{\frac m2,k+r\}$.
Every $i\in\{0,\dots,m\}$ has $i-\max\{s_j\leq i: 0\leq j\leq \lceil\frac mr\rceil\}\leq r$.
Fix $i$ and fix $s_j$ satisfying this inequality.
Hence, by \cref{cor:basic-approx},
$|F_{v_i}|\geq|F_{v_{s_j}}|-r$.
If $|F_{v_{s_j}}|\geq k+r$, then $|F_{v_i}|\geq k$,
and we are done.
Else, $|F_{v_{s_j}}|\geq\frac m2$,
so
\[|F_{v_i}|
\geq
|F_{v_{s_j}}|-r
\geq
|F_{v_{s_j}}|
-\left(1-\frac1{1+\varepsilon}\right)\frac m2
\geq
|F_{v_{s_j}}|
-\left(1-\frac1{1+\varepsilon}\right)|F_{v_{s_j}}|
=\left(\frac1{1+\varepsilon}\right)|F_{v_{s_j}}|.\]

In case 2b,
we have some smallest value $\ell$
for which $|F_{v_{s_\ell}}|<\min\{\frac m2,k+r\}$.
By the same analysis,
since this is the smallest such value,
we satisfy the approximation guarantees for all $F_{v_i}$ with $i<s_\ell$.
Hence, by induction, the final concatenated path
satisfies all the approximation guarantees.
\end{proof}

\section{A Monte Carlo Algorithm}
\label{sec:monte-carlo}

In this brief section,
we describe a simple randomized Monte Carlo algorithm:

\begin{theorem}
\label{thm:monte-carlo-sampling}
For a given pair of points $s$ and $t$, and a set of objects $\cC$,
there is an iterative algorithm with the following properties:
\begin{itemize}
  \item Each iteration takes $\SP(n)$ time, and produces a separating set of objects.
  \item If there exists a separating set of objects $C\subset\cC$,
    then, with high probability,
    the algorithm will find a separating set of objects $C'\subset\cC$
    with size $|C'|\leq|C|$
    in $\OO\left(\frac{n}{|C|}\log n\right)$ iterations.
\end{itemize}
\end{theorem}

\begin{proof}
The algorithm is quite simple:
In each iteration, a uniformly random object $c$ is sampled from $\cC$,
and then $D_c$ and $F_c$ (defined in \cref{prop:characterization}) are computed.
By \cref{prop:characterization},
a set of objects with size at most $|C|$
will be found if $c\in C$,
so the resulting iteration count is
the number required to sample an element of $C$ with high probability.
\end{proof}

A useful consequence of this theorem is that we can
probabilistically determine if a separating set of size $\OO(\sqrt{n})$ exists
within $\OO(\sqrt{n}\log n)$ iterations.
We will use this consequence in the next section.

\section{Divide and Conquer}
\label{sec:div-conq}

In this section, we will leverage well-approximated paths in order to devise a divide-and-conquer algorithm.
The core divide-and-conquer structure is given in \cref{prop:div-conq}. We then combine this with the single-source shortest path algorithms in \cref{sec:sssp} to obtain the main results of the paper.

\begin{proposition}
\label{prop:div-conq}
Let $s$ and $t$ be points in the plane, and let $\cC$ be a set of
planar objects.
Let $\pi$ be some path from $s$ to $t$.
Assume that each object in $\cC$
intersecting $\pi$ does so a constant number of times,
and that these intersection points can be computed in constant time.
Let $\overline{G}$ denote the intersection graph in the homology cover,
and
let $\SP(n')$ be the time required to compute a shortest-path
through an induced subgraph of $\overline{G}$ with $n'$ vertices.
Assume $\SP(n')$ is super-additive.
Let $\text{OPT}$ denote the size of the optimum solution to the point-separation problem.
Then:
\begin{itemize}
  \item There is a randomized Monte-Carlo algorithm that
    (w.h.p.) produces
    a solution to the point-separation problem
    of size at most $\text{OPT}+1$
    in $\OO\left(\left(\sqrt{n}\log n\right)\cdot\SP(n)\right)$ time.
  \item For any $0<\varepsilon\leq1$,
    there is a deterministic algorithm that produces
    a solution to the point-separation problem
    of size at most $(1+\varepsilon)\text{OPT}+1$
    in $\OO\left(\left(\frac{\log^2 n}{\varepsilon}\right)\SP(n)\right)$~time.
\end{itemize}
\end{proposition}

\begin{proof}
These two results form one proposition since they share a straightforward algorithm,
with the difference that only one of them applies
\cref{thm:monte-carlo-sampling}.

\begin{figure}
    \centering
    \includegraphics[scale=1.0,page=6]{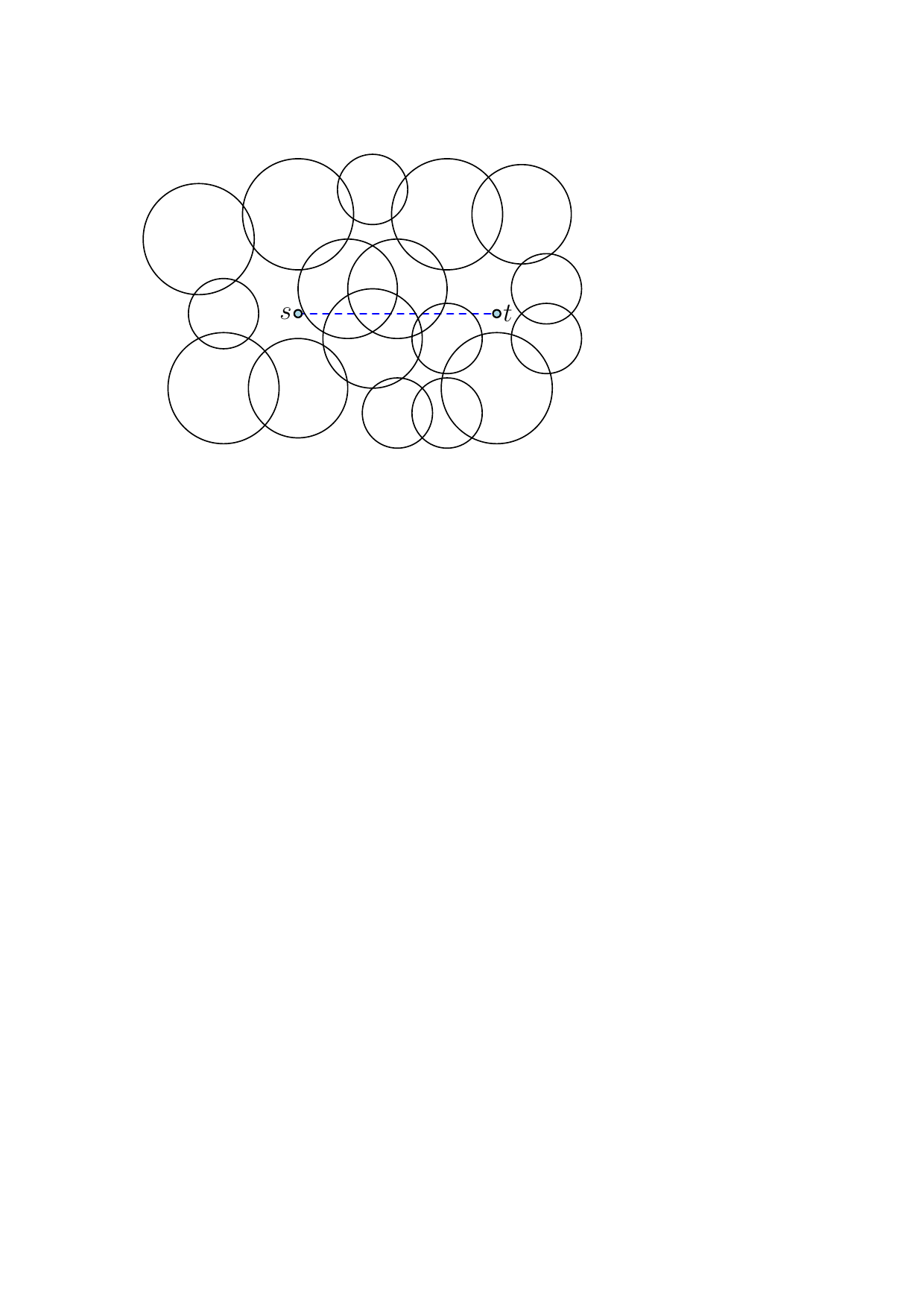}
    \caption{The intersection points of objects with $\overline{st}$ are ordered.}
    \label{fig:intersection-ordering}
\end{figure}

Consider all intersection points, $A$, of all objects intersecting $\pi$.
Note that any feasible solution to the point-separation problem must include at least one of the corresponding objects,
otherwise the path $\pi$ connects $s$ and $t$.
Sort all these objects along $\pi$.
At each stage of the divide and conquer, we will do the following:
\begin{itemize}
  \item Pick the object $c\in\cC$ inducing the (lower) median intersection point of the remaining intersection points in this ordering of $A$.
  \item Compute a $(1+\varepsilon,k)$-well-approximated path $P$ from $c^{-1}$ to $c^1$ in $\overline G$
    using \cref{alg:computepath}, with fixed choices of $\varepsilon$ and $k$.
  \item Remove from $\cC$ all objects in the $1$-hop neighbourhood of $P$, and also remove the corresponding vertices in $\overline G$.
  \item Partition $\cC$ by connected components in $\overline G$ and $A$.
  At most one connected component contains
  a contiguous prefix of the original ordering, and at most one contains a contiguous suffix of the original ordering.
  Let $x_L$ be the first intersection point induced by an object in $P$,
  and let $x_R$ be the last such intersection point.
  One can create a new path $\pi'$ from $s$ to $t$ by concatenating
  the segment $\overline{sx_L}$, a path from $x_L$ to $x_R$ along objects from $P$,
  and the segment $\overline{x_Rt}$.
  Any closed curve in the plane separating $s$ and $t$ must use an object intersecting this path,
  and we have already deleted all the objects intersecting the middle section of this path
  (see \cref{fig:dc-new-path}).
  Let $X_L$ be the sub-ordering of intersection points before $x_L$ corresponding to objects
  that remain undeleted,
  and let $X_R$ be the corresponding sub-ordering for intersection points lying after $x_R$.
  Recurse on these two components with these two sub-orderings.
\end{itemize}
The final output of our algorithm will be the smallest
shortest-path $F_{c'}$ for $c'$ in the nets of any of the well-approximated paths,
which we obtain while running $\cref{alg:computepath}$ during the divide and conquer.

The path $\pi$ plays a key role in the separation of these connected components:
If there were $k$ intersection points intersecting $\pi$ to begin with,
then each new connected component in the residual graph
that is recursed on contains at most $\frac k2$ intersection points. Since each object is required to only intersect $\pi$ a constant number of times, the maximum depth of this recursion is $\OO(\log n)$

\begin{figure}
    \centering
    \includegraphics[scale=1.0,page=1]{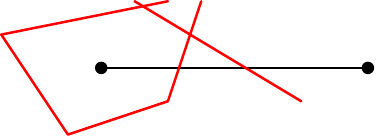}
    \hspace{2em}
    \includegraphics[scale=1.0,page=2]{dc-new-path}
    \caption{Given a planar path from $s$ to $t$, and a separating set of objects, we can generate a new path from $s$ to $t$ that contains a prefix and suffix of the original path, and sub-curves of the objects.
    }
    \label{fig:dc-new-path}
\end{figure}

It therefore follows from
\cref{prop:well-approximated-paths}
that this component of the algorithm takes
$\OO\left(\frac1\varepsilon
\max\left\{1,\log\min\{n,k+\frac{\varepsilon n}2\}-\log\frac4\varepsilon\right\}
\cdot\SP(n)\cdot\log n\right)$
time,
since we can aggregate at each level of the divide and conquer algorithm
and apply super-additivity.

We now show that this divide-and-conquer
algorithm obtains a solution of size at most $\left\lfloor(1+\varepsilon)\text{OPT}+1\right\rfloor$
if $\text{OPT}\leq k$.
At each level of recursion, this approximation guarantee follows from
\cref{prop:well-approximated-paths} for that particular subproblem.
However, In order to get this guarantee on the global problem,
we must translate the approximation guarantees from
our well-approximated paths on subproblems
into approximation guarantees for the global problem,
which requires an additional insight:
For any object $c$
which is in the $1$-hop neighbourhood
of a well-approximated path,
it is computed at some level of recursion
with a subset $C$ of the original objects from $\cC$.
Suppose $c$ is an element of the optimum
solution to the point-separation problem,
so $F_c$ is minimal. If $F_c\subseteq C$ then it is contained entirely in this branch of the recursion and we are done.
If $F_c\not\subseteq C$,
there must be some element
$c'\in F_c\setminus C$
that was discarded at a higher level of recursion.
Assume without loss of generality that
$c'$ was an element discarded at the first level of recursion
among all elements of $F_c\setminus C$.
Let $C'\subset\cC$ be the set of available elements
at the level $c'$ was discarded, so $c'\in C'$.
In this case, $c'\in F_c$.
Since $F_c$ was an optimum solution,
there is some optimum solution $F_{c'}\subset C'$.
Since $c'$ was discarded at this level,
$|F_{c'}|$ must have been well-approximated.

It remains to pick $\varepsilon$ and $k$,
to obtain guarantees,
and to (in one case) apply
\cref{thm:monte-carlo-sampling}.

To obtain the deterministic algorithm, we use $\varepsilon$
as the parameter itself, and use $k=n+1$,
so that the required approximation guarantee follows from the above argument.
In this case, the algorithm
takes $\OO\left(\frac1\varepsilon\SP(n)\log^2n\right)$ time.

Finally, to obtain the Monte-Carlo algorithm,
we choose $\varepsilon=\frac1{1+\sqrt{n}}$,
and $k=\sqrt{n}$.
We also apply $\OO\left(\sqrt{n}\log n\right)$ iterations of
\cref{thm:monte-carlo-sampling},
so any solution of size $\geq k$ is found with high probability
in
$\OO\left(\sqrt{n}\cdot\SP(n)\cdot\log n\right)$ time.
Note that
\begin{align*}
\centering
\log\left(k+\frac{\varepsilon n}2\right)-\log\frac4\varepsilon
&=\log\left(\sqrt{n}+\frac{n}{2(1+\sqrt n)}\right)-\log\left(4\left(1+\sqrt{n}\right)\right)\\
&\leq\log\left(2\sqrt{n}\right)-\log\left(4\sqrt{n}\right)<0,
\end{align*}
so
the divide and conquer step
takes
$\OO\left(\frac1\varepsilon\cdot\SP(n)\cdot\log n\right)=\OO\left(\sqrt{n}\cdot\SP(n)\cdot\log n\right)$
time.
If the optimum solution has size $\text{OPT}\geq k=\sqrt{n}$,
then it was found by the Monte-Carlo step with high probability.
If instead the optimum solution has size $\text{OPT}<k=\sqrt{n}$,
then
the divide and conquer step produces a solution 
(an element of some net) with size at most $\left(1+\frac1{1+\sqrt{n}}\right)\text{OPT}+1$.
Since the solutions are integer-sized, this bound can be reduced to
$\left\lfloor\left(1+\frac1{1+\sqrt{n}}\right)\text{OPT}\right\rfloor+1
=\text{OPT}+1,$
completing both cases of the proof.
\end{proof}

We remark that it was not strictly necessary for the proof that each object intersected $\pi$ a constant number of times.
Rather, it would have sufficed for the total number of intersection points to be $\OO(n)$. However, our applications in the next section will want this additional assumption.

\subsection{Main Theorems}

By applying our algorithms for
single-source shortest-paths, we obtain our main results:

\begin{theorem}
\label{thm:disk-main}
Let $s$ and $t$ be points in the plane, and let $\cC$ be a set of
$n$ disks.

If the optimum solution to the point-separation problem
uses $\text{OPT}$ disks,
then we can find a solution using at most
$\text{OPT}+1$ disks with high probability
in $\OO(n^{\frac32}\log^2n)$ time.

If the optimum solution to the point-separation problem
uses $\text{OPT}$ disks,
then for any $0<\varepsilon\leq1$,
we can find a solution using at most
$(1+\varepsilon)\text{OPT}+1$ disks
in $\OO(\frac1\varepsilon n\log^3n)$ time.
\end{theorem}

\begin{proof}
Apply \cref{prop:div-conq} and \cref{thm:disk}.
\end{proof}

\begin{theorem}
\label{thm:line-seg-convex-poly-main}
Let $s$ and $t$ be points in the plane, and let $\cC$ be a set of
$n$ constant-complexity polylines (i.e.~polylines made up of $\OO(1)$ segments).
We allow any of the polylines enclosing a region to either include or exclude this region.
Assume the optimum solution to the point-separation problem
uses $\text{OPT}$ polylines.

First, we can find a solution to the point-separation problem using at most
$\text{OPT}+1$ polylines (w.h.p.)
in $\lO(n^{\frac{11}6})$ time.

Second, for any $0<\varepsilon\leq1$,
we can find a solution using at most
$(1+\varepsilon)\text{OPT}+1$ polylines
in $\lO(\frac1\varepsilon n^{\frac43})$ time.
\end{theorem}

\begin{proof}
We start by reducing to the case where each of the polylines intersecting the segment $\overline{st}$
includes its interior regions (if it has any).
We refer to the polylines that do not include their interiors as \defn{hollow},
and those that do as \defn{filled}.
Recall that we have assumed no single object in $\cC$ separates $s$ and $t$
(such an object would form an optimal solution, and can be checked for in linear time).
Therefore, no polyline includes exactly one of $s$ and $t$ in an interior region,
whether or not it is filled.

Since we have assumed no single object in $\cC$ separates $s$ and $t$,
there cannot be any polyline with an interior region containing exactly one of $s$ or $t$.
Let $\cC'$ be formed from $\cC$ by modifying the elements of $Q$
to all be hollow.

There is a correspondence between the objects in $\cC$ and $\cC'$.
We claim that subsets separating $s$ and $t$
are also in direct correspondence.
Note that each object in $\cC'$ is a subset of the corresponding object in $\cC$,
so we need only show how to convert a minimum subset of $\cC$ separating $s$ and $t$
into a minimum subset of $\cC'$ separating $s$ and $t$.

Let $C\subset\cC$
be a minimal subset of objects separating $s$ and $t$.
Let $\overline G$ be the geometric intersection
graph in the homology cover induced by $\cC$,
and let
$\overline G'$ be the geometric intersection
graph in the homology cover induced by $\cC'$.
By \cref{prop:characterization},
$C$ is induced by some path $D_c$ (for $c\in C$) in $G$
denoted $c^{-1}=c_0^{b_0},c_1^{b_1},\dots,c_m^{b_m}=c^1$.
Moreover, each adjacent pair of objects $c_i,c_{i+1}$
must intersect in the plane as a pre-requisite.
Suppose one such pair of objects $c_i,c_{i+1}$
intersect, but their corresponding objects in $\cC'$
do not intersect.
Then $c_i$ must lie completely in the interior of $c_{i+1}$, or vice versa.
Therefore, the larger object of the two must intersect a superset of the
objects that the smaller one intersects,
so the smaller object can be discarded from the set $C$
to obtain a smaller separating set of objects.
This contradicts the optimality of $C$, completing the reduction.

Therefore, we can use
\cref{thm:sp-linesegs} (which computes hollow shortest-paths only)
to compute the shortest-paths during the execution of the algorithm
used for \cref{prop:div-conq}.
\end{proof}

\begin{theorem}
\label{thm:axis-aligned-rectangle-main}
Let $s$ and $t$ be points in the plane, and let $\cC$ be a set of
$n$ constant-complexity rectilinear polylines.
We allow any of the polylines enclosing a region to either include or exclude this region.
Assume the optimum solution to the point-separation problem
uses $\text{OPT}$ polylines.

First, we can find a solution to the point-separation problem using at most
$\text{OPT}+1$ rectilinear polylines (w.h.p.)
in $\OO(n^{\frac32}\log^2n)$ time.

Second, for any $0<\varepsilon\leq1$,
we can find a solution using at most
$(1+\varepsilon)\text{OPT}+1$ rectilinear polylines
in $\OO(\frac1\varepsilon n\log^3n)$ time.
\end{theorem}

\begin{proof}
Rectilinear polylines
are a special case of
polylines,
so we can handle the mixed filled/hollow objects as
in the proof of \cref{thm:line-seg-convex-poly-main}.
To get the stated time complexity, apply \cref{prop:div-conq} and \cref{thm:sp-linesegs-rectilinear}.
\end{proof}

\bibliography{references}
\pagebreak
\section{A Formal Description of the Homology Cover Space}
\label{sec:manifold}

Let $s,t$ be points in $\RR^2$.
By using the extended plane,
so that the extended plane without $s$ and $t$ is an annulus,
the homology cover space
is exactly the non-trivial $\ZZ_2$-torsor
over the annulus.
We give a more elementary description below.

\begin{figure}[h]
    \centering
    \includegraphics[width=0.3\textwidth]{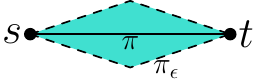}
    \caption{An example of
    $s,t,\pi$
    and $\pi_\epsilon$,
    with $\epsilon=\frac13$.}
    \label{fig:manifold-construction}
\end{figure}

Let $\pi$ be a simple planar path from $s$ to $t$.
Let $\epsilon\in(0,1)$,
and let $\pi_\epsilon$ be the union of open balls
$B\left(x,\epsilon\cdot\min(d(x,s),d(x,t))\right)$ for $x\in\pi$.
Note that the boundary $\partial\pi_\epsilon$
consists of exactly two disjoint paths from $s$ to $t$.
See \cref{fig:manifold-construction} for an example of $s,t,\pi$ and $\pi_\epsilon$.
We can create a manifold consisting of
two copies $P_{-1},P_1$ of
$\RR^2\setminus\pi$,
and two copies $B_{-1},B_1$
of $\pi_\epsilon\cup\partial\pi_\epsilon$.
Each of these subsets of $\RR^2$
has a \defn{top} half, consisting
of all points strictly above the line extension
of $\overline{st}$,
and a \defn{bottom} half,
consisting of all the points strictly below the same line extension.
For $i\in\{-1,1\}$,
we create the manifold by
identifying the top half of $B_{i}$
with the corresponding points in the top half of $P_{i}$,
and
identifying the bottom half of $B_{i}$
with the corresponding points in the bottom half of $P_{-i}$.
The result is exactly what we henceforth call the \defn{homology cover space}.
This manifold is identical regardless of the choice of $\epsilon\in(0,1)$,
so we simplify our notation
by assuming $P_{-1}$ and $P_1$ each include exactly one of the
two copies of $\pi$ as well,
and forgo usage of $B_{-1}$ and $B_1$.

\section{A Simple Deterministic Additive $+k$ Approximation}
\label{sec:addk}

In this short section, we outline a deterministic algorithm
with an additive approximation guarantee surprisingly not attainable using
\cref{prop:div-conq}.

\begin{theorem}
\label{thm:deterministic-big-additive-approx}
For a given pair of points $s$ and $t$, and a set of $n$ objects $\cC$, and a parameter $k$,
there is an algorithm that produces an additive $+k$ approximation for the point-separation problem in $\OO\left(\frac nk\cdot\SP(n)\right)$ time.
\end{theorem}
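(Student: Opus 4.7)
The plan is to use a simple, single-level sampling strategy requiring only $O(n/k)$ shortest-path computations. The algorithm: pick an arbitrary $c_0 \in \cC$, compute $D_{c_0}$ and $F_{c_0}$ via one shortest-path call, then select the vertices at positions $0, k, 2k, \ldots$ along $D_{c_0}$, giving at most $\lceil |D_{c_0}|/k \rceil \leq \lceil n/k \rceil$ additional objects. For each selected object $c$, compute $D_c$ and $F_c$, and return the smallest $F$ among all computed (including $F_{c_0}$). The running time is immediate: $1 + O(n/k)$ shortest-path computations, for $O((n/k)\cdot\SP(n))$ in total.

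For the approximation guarantee, my plan is to invoke \cref{cor:basic-approx}: showing that some selected $c$ is at $\overline G$-distance at most $k$ from some copy of an object $c^* \in C^*$ of an optimum $C^*$ immediately gives $|F_c| \leq |F_{c^*}| + k = \text{OPT} + k$. Since $D_{c_0}$ is a shortest path in $\overline G$, distances along it equal distances in $\overline G$, so every vertex of $D_{c_0}$ is within $\overline G$-distance $\lceil k/2 \rceil$ of some selection; it therefore suffices to show that some vertex of $D_{c_0}$ lies within $\overline G$-distance $\lfloor k/2 \rfloor$ of a copy of an object in $C^*$.

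I would establish this via a dichotomy on $|F_{c_0}|$. If $|F_{c_0}| \leq \text{OPT} + k$, then $F_{c_0}$ is already an acceptable answer and is among the candidates returned. Otherwise, a ``detour'' argument---augmenting $C^*$ with a shortest $G$-path of length $\delta$ from $c_0$ to $C^*$ gives a separator containing $c_0$ of size at most $\text{OPT} + \delta$ (the path of length $\delta$ contributes at most $\delta$ new objects outside of $C^*$, since its final vertex lies in $C^*$)---yields $|F_{c_0}| \leq \text{OPT} + \delta$, forcing $\delta > k$ in the bad case. Then I would argue that the shortest path $D_{c_0}$ must visit some intermediate vertex within $\overline G$-distance $\lfloor k/2 \rfloor$ of a copy of a $C^*$-object, leveraging the shortest-path optimality of $D_{c_0}$ and the structure of essential cycles in the homology cover induced by $C^*$.

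The main obstacle will be making this last step rigorous, particularly for ``nested'' configurations where $F_{c_0}$ and $C^*$ share no common object and correspond to essentially parallel non-contractible cycles in the homology cover. I expect the resolution to come from a combinatorial-topological argument about how any shortest path in $\overline G$ from $c_0^{-1}$ to $c_0^{1}$ must interact with the essential cycle induced by $C^*$, complementary to the detour bound above.
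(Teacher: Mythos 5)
Your approach has a genuine gap that I don't believe can be repaired within the framework you describe. The paper's proof instead builds a spanning tree $T$ of $\overline G$, takes an Euler tour of $T$ to obtain a sequence $v_1,\dots,v_{2n}$ with the property that the $\overline G$-distance between $v_i$ and $v_j$ is at most $|i-j|$, and samples this sequence at intervals of $2k+2$. This guarantees that \emph{every} vertex of $\overline G$ (in particular, every copy of every object in the optimum $C^*$) is within sequence distance $k$, and hence $\overline G$-distance $k$, of a sampled vertex, after which \cref{cor:basic-approx} and \cref{prop:characterization} finish the argument.

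The flaw in your approach is that your samples cover only a neighbourhood of the single shortest path $D_{c_0}$, and the claim you flag as ``the main obstacle''---that some vertex of $D_{c_0}$ must lie within $\overline G$-distance $\lfloor k/2\rfloor$ of a copy of an object in $C^*$ whenever $|F_{c_0}|>\mathrm{OPT}+k$---is simply false. Topologically, both lifts $c_0^{-1}$ and $c_0^{1}$ project to the same point of the plane and therefore lie on the \emph{same} side of the non-contractible cycle in the homology cover induced by $C^*$; consequently $D_{c_0}$ is under no obligation to approach that cycle. Concretely: let $C^*$ be a tight ring of six disks around $s$, place a separate, much larger ring of, say, $10^3$ disks around $t$, connect the two rings only by a long thin chain (or not at all, within one connected component of $\overline G$), and pick $c_0$ in the ring around $t$. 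Then $D_{c_0}$ traces the ring around $t$, $|F_{c_0}|\approx 10^3\gg\mathrm{OPT}+k$, and yet every vertex of $D_{c_0}$, and every $F_c$ you compute along it, stays in the ring around $t$ and reports a value near $10^3$. Your detour argument correctly shows $\delta>k$ in this bad case, but that conclusion (that $c_0$ is \emph{far} from $C^*$) actually cuts against, not for, the hope that $D_{c_0}$ passes near $C^*$. Replacing the single path $D_{c_0}$ by a structure whose $1$-hop thickening covers the entire vertex set---such as the Euler tour of a spanning tree used in the paper---is what makes the net argument go through.
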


To prove this, we will use a construction for a well-approximated path
differing from \cref{alg:computepath}.

\begin{proof}
Assume without loss of generality that $\overline G$ is connected
(if not, handle each connected component separately).
Let $T$ be an arbitrary spanning tree of $\overline G$
(for example, a single-source shortest-path tree from an arbitrary vertex).
Consider an Euler tour around $T$,
obtaining a circular sequence covering all vertices.
Cut this circular sequence at some arbitrary point to obtain a sequence $S=(v_1,\dots,v_{4n})$ covering all vertices.
The key property of this sequence is that the distance from $v_i$ to $v_j$ in $\overline G$
is at most $|i-j|$.

In a similar sense to well-approximated paths,
we pick a \emph{net} of vertices in the sequence $S$.
That is, let $S'$ be the subsequence $v_1,v_{1+2k},v_{1+4k},v_{1+6k},\dots,v_{4n}$.
Then, compute $D_c$ and $F_c$ for each $c$ with a vertex in $S'$.
For any $v_i\in S$,
there is some $v_{1+2jk}$ in $S'$
so that $|i-(1+2jk)|\leq k$.
Hence, $|F_{v_i}|\leq|F_{v_{1+2jk}}|+k$ by
\cref{cor:basic-approx},
so the smallest set $F_c$ computed for $c$ with a vertex in $S'$
is an additive $+k$ approximation for the smallest separating set
by \cref{prop:characterization}.

Finally, since we computed a shortest-path only for each element of $S'$ (plus one extra to compute $T$),
and $|S'|\in\OO\left(\frac nk\right)$,
the algorithm runs in the claimed time complexity.
\end{proof}

This algorithm could be sped up for certain classes of objects and small values of $k$,
since there are some techniques in geometric intersection graphs
that can perform shortest-path queries faster if the shortest-path tree
for a nearby vertex (ideally, a neighbouring vertex) is known~\cite{ChanS17}.
However, such speedups would come with a poor tradeoff in the form of an additional factor of $k$ in the time complexity.
Moreover, for sufficiently small values of $k$,
the guarantee of \cref{prop:div-conq} is better.

\section{Fast Shortest-Paths in the Homology Cover}
\label{sec:sssp}

In this section, we discuss methods for computing single-source shortest-paths (SSSP) in the intersection graph in the homology cover, $\overline G$. The object types we focus on in this paper are given in \cref{tab:SSSP}, along with two extra types.
Our results are primarily based on other advances for SSSP in geometric intersection graphs, however small modifications to the algorithms are needed to apply them within the homology cover.

\begin{table}[htbp]
    \centering
    \begin{tabular}{|l|l|l|}
    \hline
    \textbf{Object Type} & $\SP(n)$
    & \textbf{Reference}\\
    \hline
    Disk & $\OO(n \log n)$
    & \cref{thm:disk}\\
    \hline
    Line segment & $\lO(n^{4/3})$
    & \cite{spalding2025separating} \\
    \hline
    $\OO(1)$-complexity polyline & $\lO(n^{4/3})$
    &
    \cite{spalding2025separating} \\
    \hline
    Rectilinear line segment & $\OO(n \log n)$
    & \cref{thm:sp-linesegs-rectilinear}\\
    \hline
    $\OO(1)$-complexity rectilinear polyline & $\OO(n \log n)$
    & \cref{thm:sp-linesegs-rectilinear}\\
    \hline
    \end{tabular}
    \caption{Summary of SSSP running times for geometric intersection graphs in homology cover. All running times here are deterministic.}
    \label{tab:SSSP}
\end{table}

\paragraph*{Disks}
For the intersection graph of disks in the plane, the fastest single-source shortest-path algorithm is quite recent.
Moreover, two groups independently discovered optimal algorithms (optimal for the algebraic decision tree model):
\begin{itemize}
    \item De Berg and Cabello~\cite{deberg2025algorithm} showed that disk graphs
    in the plane admit a structure called a ``clique-based contraction''
    with some helpful properties,
    and that this structure can be computed efficiently.
    \item Brewer and Wang~\cite{brewer2025optimal} instead made use of additively-weighted Voronoi diagrams.
\end{itemize}
Both of these groups' methods attain $\OO(n\log n)$-time algorithms for the single-source shortest-path problem
in a disk graph, including the ability to compute a shortest-path tree.
We will show that the method of de Berg and Cabello can be extended to the intersection graph in the homology cover in the same time complexity.

A \defn{clique-based contraction} of a graph $G=(V,E)$ is a graph $H=(U,F)$
so that every $u\in U$ is exactly a subset of $V$ forming a clique.
Moreover, it is required that
every $v\in V$ is in exactly one $u\in U$ (i.e., $U$ is a partition of $V$),
and that an edge $uu'\in F$ exists in $H$ if and only if there is some edge $vv'\in E$
with $v\in u$ and $v'\in u'$.

De Berg and Cabello showed that,
if $G$ is an intersection graph of $n$ disks $\cC$, a clique-based contraction
with $|F|=\OO(n\log n)$ ``post-contraction'' edges exists,
and that it can be constructed in the same time complexity.
Furthermore, the clique-based contraction
they construct has the property that every $u\in U$
corresponds specifically to a \emph{stabbed clique} in $G$.
That is, for each $u\in U$,
there is some point $p$ in the plane
contained by \emph{every} $v\in u$.
The resulting lemma is:
\begin{lemma}[{\cite[Proposition 8]{deberg2025arxiv}}]
\label{lemma:clique-contract}
    For a set of $n$ disks $\cC$ in the plane,
    let $G$ be the geometric intersection graph of $\cC$.
    Then,
    a clique-based contraction of $G$
    with $\OO(n\log n)$ edges can be computed in $\OO(n\log n)$ time.
\end{lemma}

Another element of their proof involves \defn{bichromatic intersection testing}:
Given two sets $B$ and $R$ of disks (``blue'' and ``red'' disks),
find, for each $b\in B$,
if it intersects any element of $R$ (the element itself need-not be identified).
Well-known methods exist for performing this step among disks in the plane in $\OO((|B|+|R|)\log|R|)$ time~\cite{chan2016all,ChanS17,klost2023algorithmic}.
De Berg and Cabello also showed that, by combining the clique-based contraction of $G$
and the bichromatic intersection testing algorithm,
they can perform shortest-path computations in $G$ in $\OO(n\log n)$ time.
Moreover, these are the only elements needed.
That is, if we can compute a clique-based contraction of the intersection graph $\overline G$ in the homology cover
with the same guarantees,
and perform bichromatic intersection testing on the objects $\overline\cC$ in the homology cover,
then we can solve the single-source shortest-path problem in the same time complexity.
This is exactly the method we will employ:

\begin{theorem}[Restatement of \cref{thm:disk}]
\label{thm:disk-restatement}
For a set of disks $\cC$ in the plane,
and points $s$ and $t$,
let the geometric intersection graph in the homology cover
be denoted $\overline G$.
Then,
there is an algorithm
that computes a single-source shortest-path tree within $\overline G$ in $\OO(n\log n)$ time.
\end{theorem}

\begin{proof}
By the method of de Berg and Cabello~\cite{deberg2025algorithm},
it suffices to construct a clique-based contraction of $\overline G$ with $\OO(n\log n)$ edges
in the same time complexity,
and give an algorithm for static intersection detection in the same time complexity.
In fact,
Spalding-Jamieson and Naredla already gave an algorithm for
bichromatic intersection testing (``static intersection detection'') in the homology cover
with the required time complexity~\cite[Appendix C]{spalding2025arxiv}.

We now show how to compute the clique-based contraction.
First, generate the clique-based contraction of $G$,
in which all the cliques are stabbed cliques.
For a stabbed clique $Q\subset\cC$ with stabbing point $p$,
let $p$ be the canonical point of all $c\in Q$
for the purposes of labelling in the homology cover.
Then, to generate the cliques in $\overline{G}$,
we create two copies of each stabbed clique $Q$,
which we denote $Q^{-1}$ and $Q^{1}$,
relative to the location of the stabbing point $p$.

It remains to compute the edges in the clique-based contraction of $\overline G$
using the clique-based contraction of $G$.
Let $Q$ and $R$ be cliques in the clique-based contraction of $G$,
and let $b_1,b_2\in\{-1,1\}$.
Let the stabbing points of $Q$ and $R$ be denoted $p_Q$ and $p_R$, respectively.
Let $F_Q$, $F_R$, $F_{Q^{b_1}}$ and $F_{R^{b_2}}$ be the union of elements in each clique (de Berg and Cabello call these \defn{flowers}).
Then, $Q^{b_1}$ and $R^{b_2}$ intersect
if and only if there is some point $p$ in the intersection of $F_{Q^{b_1}}$ and $F_{R^{b_2}}$
so that the number of intersections of the length-$2$ polyline
$\overline{p_Rpp_Q}$
with the segment $\overline{st}$
is odd (if $b_1\neq b_2$)
or even (if $b_1=b_2$).
Moreover,
we know that no disk contains $s$ or $t$.
$F_Q$ and $F_R$ intersect if and only if one of two things occurs:
Either the boundaries of $F_Q$ and $F_R$ intersect (in which case we need only consider crossing points $p$),
or one is contained in the other.
We handle these cases separately
to determine if $F_{Q^{b_1}}$ and $F_{R^{b_2}}$ intersect.
For the former case:
de Berg and Cabello~\cite{deberg2025algorithm}
actually iterate through all
crossing points between the boundaries of $F_Q$ and $F_R$
as part of their algorithm,
so we simply check each one.
In the latter case,
we simply look at whether or not the segment $\overline{p_Rp_Q}$
crosses the segment $\overline{st}$,
since
no other path can have a different parity of crossings in this case.
\end{proof}

\paragraph*{Line segments and constant complexity polylines}
Intersection graphs of line segments (and, consequently, constant complexity polylines) in the plane
are known to have small ``biclique covers'':
A \defn{biclique cover} of a graph $G=(V,E)$
is a collection $\{(A_1,B_1),\dots,(A_k,B_k)\}$
where each $(A_i,B_i)$ is a
\defn{biclique} -- $A_i,B_i\subset V$ and $A_i\times B_i\subset E$ --
such that every edge $e\in E$ is in at least one biclique.
The \defn{size} of a biclique cover is the sum $\sum_{i}\left[|A_i|+|B_i|\right]$.
A geometric intersection graph over $n$ line segments (or, consequently, constant complexity polylines) in the plane
has a biclique cover of size $\lO(n^{\frac43})$,
or $\lO(n)$ if the segments are rectilinear~\cite{chan2023finding}.
Moreover, these constructions are also able to compute the covers in the same time complexity (up to poly-logarithmic factors).
Spalding-Jamieson and Naredla also observe that these constructions can be quite easily adapted to line segments in the homology cover~\cite{spalding2025separating}. This mainly uses the fact that the intersection of a line segment and a half-plane is another line segment (which does not apply for disks, hence the method above).

A biclique cover can be thought of as a kind of exact sparsification of a graph.
In particular,
given a biclique cover for a graph $G$ of size $K$,
single-source shortest-paths computations within $G$
can be reduced to single-source shortest-path computations within a directed graph $G'$ containing $\OO(K)$ vertices, even in the case that the vertices are given weights~\cite{spalding2025separating,spalding2025arxiv} (although we are only interested in unweighted objects).
The result is the following theorem:

\begin{theorem}[Restatement of \cref{thm:sp-linesegs}]
\label{thm:sp-linesegs-restatement}
For a set $\cC$ of line segments and/or constant complexity polylines,
and a pair of points $s$ and $t$,
let $\overline G$ be the corresponding intersection graph in the homology cover.
Then,
single-source shortest-paths in $\overline{G}$ can be computed in $\lO(n^{\frac43})$ time,
or $\lO(n)$ time if the segments/polylines are rectilinear.
\end{theorem}

\paragraph*{Rectilinear line segments and constant complexity rectilinear polylines}
For rectilinear line segments in the plane, an even faster approach to single-source shortest path is known.
In particular, Chan and Skrepetos~\cite{ChanS17,ChanS19} considered
the \defn{decremental intersection detection} problem,
in which a set of $n$ objects is given,
and then a sequence of $O(n)$ queries/updates are given,
where each update deletes an object,
and each query provides an object to test for intersection with any element of the set.
If this problem can be solved in $T(n)$ time for a class of objects,
then Chan and Skrepetos showed that a breadth-first search
(i.e. an unweighted single-source shortest-path computation) can be performed
in a geometric intersection graph for the same class of objects.
As Chan and Skrepetos observe,
for axis-aligned (rectilinear) segments,
this problem can be reduced to decremental ray-shooting queries,
and thus can be solved in $\OO(n\log n)$ time by known results~\cite{giyora2007optimal,giyora2009optimal,blelloch2008space}.
Note that this also extends to rectilinear polylines of constant complexity,
since each polyline can be broken into its components.
Using a straightforward method of Spalding-Jamieson and Naredla,
we can extend this result to the homology cover:

\begin{theorem}[Restatement of \cref{thm:sp-linesegs-rectilinear}]
\label{thm:sp-linesegs-rectilinear-restatement}
For a set $\cC$ of rectilinear line segments and/or constant complexity rectilinear polylines,
and a pair of points $s$ and $t$,
let $\overline G$ be the corresponding intersection graph in the homology cover.
Then,
single-source shortest-paths in $\overline{G}$ can be computed in $\OO(n\log n)$ time.
\end{theorem}

\begin{proof}
It suffices to solve decremental intersection detection
among line segments in the homology cover.
As Spalding-Jamieson and Naredla observe~\cite[Appendix C]{spalding2025arxiv},
intersection tests involving $n$ line segments in the homology cover
can be reduced into pairs of intersection tests involving at most $2n$ line segments among two copies of the plane,
so we can apply the method of Chan and Skrepetos~\cite{ChanS17,ChanS19} in each copy of the plane.
\end{proof}

\section{Multiplicative Approximations and $k$-Cycle Detection}
\label{sec:mult-approx}

In this section, we will accomplish two things:
\begin{itemize}
    \item First, we pose a new fine-grained hypothesis for (directed) $k$-cycle detection,
        and show that subquadratic time algorithms for certain forms of point-separation
        are impossible under this hypothesis.
    \item Second, we show that we can obtain arbitrarily good multiplicative approximations
        for point-separation under a scheme that is tight for certain classes of objects under a weaker hypothesis,
        where the time complexity of this scheme itself depends on the time complexity for $k$-cycle detection.
\end{itemize}

The $k$-cycle detection problem asks,
for a given directed graph $G$,
does $G$ admit a cycle of length $k$?
For our purposes, the value $k$ is always assumed to be a constant
whose resulting time-complexity constant-factors will sometimes be part of a function denoted $f(k)$.
In this case, the problem is known to be
equivalent to asking if a directed graph $G$ admits a
\emph{non-simple} cycle of length $k$
(up to logarithmic factors)~\cite{ayz95}.
We shall henceforth refer only to the non-simple variation of the problem.
We are interested in algorithms with complexity in terms of the number $m$ of
edges in the graph $G$.
In this respect, there are two fastest-known algorithms for (non-simple) $k$-cycle detection
(for $k\geq6$),
where the runtime of one depends on the matrix multiplication exponent $\omega\geq2$
(that is, the smallest value $\omega$ so that two $n\times n$ matrices can
be multiplied in $\lO(n^\omega)$ time);
\begin{itemize}
    \item Alon, Yuster and Zwick proposed an algorithm running in $\OO(m^{2-\frac2k})$ time if $k$ is even,
        and $\OO(m^{2-\frac2{k+1}})$ time if $k$ is odd~\cite{ayz95,ayz97}.
    \item Yuster and Zwick~\cite{yuster1994finding,yuster1997finding} proposed an algorithm
        that Dalirrooyfard, Vuong, and Williams~\cite{dalirrooyfardDV19graph} later showed runs
        $\lO(m^{\frac{k\omega-4/k}{2\omega+k-2-4/k}})$ time if $k$ is even,
        and $\lO(m^{\frac{\omega(k+1)}{2\omega+k-1}})$ time if $k$ is odd.
\end{itemize}
Even if it is were shown that the matrix multiplication exponent was $2$,
the second of these algorithms would
still take $\Theta\left(m^{\frac{2k-\frac4k}{2+k-\frac4k}}\right)\subset\Omega\left(m^{2-\frac4k}\right)$ time.%
Although this is a very fundamental problem,
no faster algorithm has been proposed in almost three decades.

Let $TC(m,k)$ be the time-complexity
of the fastest-possible algorithm that can solve
the $k$-cycle problem on a graph with $m$ edges.
Currently, we do not have any reason to believe that $\lim_{k\to\infty}TC(m,k)\in O(f(k)m^{2-\varepsilon}$
for any function $f$,
and such a result would be a very significant breakthrough.
Based on this knowledge, we pose the following fine-grained hypothesis:

\begin{hypothesis}[$k$-Cycle Hypothesis]
There is no algorithm for the $k$-cycle detection problem
running in $\OO(f(k)\cdot m^{2-\varepsilon})$ time for a fixed
$\varepsilon>0$ and function $f$.
That is, $\lim_{k\to\infty}TC(m,k)>f(k) m^{2-\varepsilon}$ for all functions
$f$ and all constants $\varepsilon>0$.
\end{hypothesis}

It should also be noted that Lincoln, Williams, and Williams gave a fine-grained lower bound
for $k$-cycle detection in terms of max-$3$-SAT~\cite[Appendix E.3]{lincoln2018tight},
although their result is weaker than the above hypothesis.
In particular, they show that the Max-$3$-SAT hypothesis
(that no algorithm for max-$3$-SAT exists running in $O(2^{(1-\varepsilon)n})$ time for $n$ variables) implies that $\lim_{k\to\infty}TC(m,k)>\Omega(m^{\frac32-\varepsilon})$ for all $\varepsilon>0$.
Spalding-Jamieson and Naredla used this to show hardness for certain forms
of unweighted point-separation~\cite[Theorem 26]{spalding2025arxiv},
although the bounds we will derive will be stronger.

\paragraph*{A new perspective on exact point-separation and $k$-cycle}

Spalding-Jamieson and Naredla constructed a conditional lower bound
for unweighted point-separation among either line segments or $3$-complexity rectilinear polylines
using a fine-grained reduction from the (non-simple) $k$-cycle detection problem with constant $k$.
They do this by creating $k$ representatives of each directed edge
as objects arranged in $k$ layers.
Since we assume $k$ to be a constant,
there are $\OO(km)=\OO(m)$ such objects in total.
With a slightly simpler form of their construction,
a solution to the point-separation
problem consisting of at most $k$ line segments corresponds directly to
a directed cycle in the original graph.
We demonstrate this simplified form of their construction for line segments in
\cref{fig:lb-examples}.
Consequently, for these object classes,
unweighted point-separation is harder than
$k$-cycle detection, for any constant $k$.
Spalding-Jamieson and Naredla
conclude that there is a conditional $\Omega(n^{\frac32-\varepsilon})$ (for all $\varepsilon>0$)
lower bound for point-separation with line segments (and other classes)
based on the conditional bound of Lincoln, Williams, and Williams.
However, a better view of their reduction also implies some other fine-grained lower bounds:

\begin{proposition}
\label{prop:lb-seg}
If there is an algorithm for point-separation among $n$ line segments
or $3$-complexity rectilinear polylines
running in $\OO(n^{2-\varepsilon})$ time for $\varepsilon>0$,
then there is some $\varepsilon'>0$
for which $k$-cycle detection
in an $m$-edge directed graph can be solved in $\OO(m^{2-\varepsilon'})$ time
for \emph{any} fixed $k$.
In particular, the $k$-cycle hypothesis would be false.
\end{proposition}

\begin{proof}
This result was essentially proven by Spalding-Jamieson and Naredla~\cite[Theorem 26]{spalding2025arxiv}.
\end{proof}

\begin{proposition}
\label{prop:lb-seg-k}
If there is an algorithm for
detecting a solution of size at most $k$ (for $k$ fixed)
in a point-separation instance with $n$ line segments
or $3$-complexity rectilinear polylines
running in time $T(n)$,
then $k$-cycle detection
can be solved in time $O(T(m))$
for a graph with $m$ edges.
\end{proposition}

\begin{proof}
This result follows from the simplified form given in
\cref{fig:lb-examples}
of the lower bound construction by Spalding-Jamieson and Naredla~\cite{spalding2025arxiv}.
Note that this simplified form is still essentially the same as theirs,
and just removes some extra layers of objects that do not change the form of the solution.
\end{proof}

\begin{figure}[h]
\centering
\includegraphics[valign=c,scale=1.00,page=1]{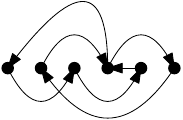}
\hspace{3em}
\includegraphics[valign=c,scale=0.50,page=3]{lower-bound-examples}
\hspace{3em}
\includegraphics[valign=c,scale=0.50,page=5]{lower-bound-examples}
\caption{Left: An instance of $k$-cycle detection in a directed graph (for $k=3$).\\
Middle: A simplified form of the method of Spalding-Jamieson and Naredla for constructing
an instance of point-separation.
$k$ rays are shot out of a point $s$, and a set of $n$ vertex representatives
are chosen along each.
Between consecutive rays, directed edges oriented clockwise are represented as line segments.
The point $t$ is placed somewhere far away.\\
Right: The final instance constructed consists of just the segments representing edges.}
\label{fig:lb-examples}
\end{figure}

\paragraph*{Hardness for $2$-complexity rectilinear polylines}

As explained
in the caption of
\cref{fig:lb-examples},
a simplified form of Spalding-Jamieson and Naredla's reduction from $k$-cycle detection to point-separation among line segments is as follows:
\begin{itemize}
  \item Let $G$ be the input (directed) graph with $n$ vertices and $m$ edges,
    and label its vertices $v_1,\dots,v_n$.
  \item Fix some arbitrary point $s$.
  \item Shoot $k$ unique rays out of $s$.
  \item Pick $n$ points along each ray. Label them $v_1,\dots,v_n$ arbitrarily.
  \item Between consecutive rays, create one segment representing each directed edge, oriented clockwise.
    For instance, for rays $r,r'$ (where the traversal from $r$ to $r'$ is clockwise),
    if there is an edge $v_iv_j\in E(G)$,
    create a segment from the point labelled $v_i$ along $r$ to the point labelled $v_j$ along $r'$.
    These segments form the \emph{only} objects in the point-separation instance.
    That is, the rays and representative points are only used to construct these segments.
  \item Place $t$ sufficiently far away. For example, place it outside the bounding box of all the segments.
\end{itemize}
The important quality for correctness of the reduction
is that if you shoot out $k$ rays from $s$ in between the ones used for the construction,
each one intersects a disjoint subset of objects,
while the original rays separate these subsets and intersect their endpoints.

Since we can let $\pi$ be a path along one of the original rays,
\cref{prop:characterization}
implies we can obtain reductions for other classes of objects
by replacing these line segments with other object types,
so long as the new objects do not cross the original rays,
and only touch the original rays at the same positions as the line segments.

\begin{figure}[h]
\centering
\includegraphics[valign=c,scale=0.50,page=6]{lower-bound-examples}
\hspace{1em}
\includegraphics[valign=c,scale=0.50,page=7]{lower-bound-examples}
\hfill
\includegraphics[valign=c,scale=0.50,page=8]{lower-bound-examples}
\hspace{1em}
\includegraphics[valign=c,scale=0.50,page=9]{lower-bound-examples}
\caption{Two demonstrations of how line segments between consecutive rays
  can be replaced by $2$-complexity rectilinear polylines.}
\label{fig:rectilinear-polyline-lb}
\end{figure}

Between two rays out of $s$,
for a line segment with an endpoint on each,
we can replace the line segment with a $2$-complexity rectilinear polyline
that crosses neither ray: There are only two $2$-complexity rectilinear polylines
that have the same endpoints as the line segment, and one of them lies entirely in the cone of the rays
(see \cref{fig:rectilinear-polyline-lb}).
Therefore, we obtain the following fine-grained bounds:

\begin{proposition}
\label{prop:lb-2-rect}
If there is an algorithm for point-separation among $n$ $2$-complexity rectilinear polylines
running in
time $T(n)$,
then $k$-cycle detection
in an $m$-edge directed graph can be solved in
$\OO(T(n))$ time
for \emph{any} fixed $k$.
\end{proposition}

\begin{proposition}
\label{prop:lb-2-rect-k}
If there is an algorithm for
detecting a solution of size at most $k$
in a point-separation instance with $n$ $2$-complexity rectilinear polylines
running in time $T(n)$,
then
$k$-cycle detection
in an $m$-edge directed graph can be solved in
$\OO(T(n))$ time for any fixed $k$.
\end{proposition}

\paragraph*{Reductions from point-separation to $k$-cycle}

In the previous subsection,
we discussed reductions from $k$-cycle
to point-separation.
These were not particularly surprising,
since some forms of this were already essentially
shown by Spalding-Jamieson and Naredla~\cite{spalding2025separating}.
However,
we will show in this section that
there is also a partial converse relationship:
Many variants of point-separation
can be reduced \emph{to} $k$-cycle.

By \cref{prop:characterization},
the optimum solution to a point-separation instance with objects $\cC$
is induced by the smallest shortest-path $D_c$
through the geometric intersection graph in the homology cover $\overline G$.

Ignoring the geometric aspects of this problem (for now)
gives us a problem over a graph:
For an undirected graph $G$ with $n$ vertices and $m$ edges, and $r=\OO(n)$ disjoint pairs of vertices $(u_1,v_1),(u_2,v_2),\dots,(u_r,v_r)$,
which pair $(u_i,v_i)$ is the closest in terms of shortest-path distance in $G$?
We call this the \defn{shortest-of-shortest-paths (SoS)} problem.

Rather than study this problem directly directly, we instead define a very slightly more general problem:
For an undirected graph $G$ with $n$ vertices and $m$ edges, a value $d$,
and $r=\OO(n)$ disjoint pairs of vertices $(u_1,v_1),(u_2,v_2),\dots,(u_r,v_r)$,
is any pair of vertices $(u_i,v_i)$ at shortest-path distance at most $d$ in $G$?
We call this the \defn{distance-$d$ detection} problem.

We will show that the distance-$d$ detection problem
is essentially equivalent to the $k$-cycle detection problem
with $k=d$,
in a fine-grained sense.
We prove this with two lemmas, giving the two directions.

\begin{lemma}
If the distance-$d$ detection problem on an undirected graph with $n$ vertices and $m$ edges can be solved
in $\OO(f(n,m,d))$ time,
then the $k$-cycle detection problem (for a fixed $k$)
on a directed graph can be solved in $\OO(f(n,m,k))$ time.
\end{lemma}

To avoid confusion with the similar (but slightly different without another result we state below) notation $TC(m,k)$,
we have used $f(n,m,k)$ and $f(n,m,d)$ here.

\begin{proof}
Let $G$ be a directed graph
that is an instance of the $k$-cycle detection problem.
Construct a new undirected graph $H$ as follows:
For each vertex $v\in V(G)$,
create $k+1$ vertices $v^0,v^1,\dots,v^k$
in $H$.
For each directed edge $uv\in E(G)$,
create $k$ undirected edges $u^0v^1,u^1v^2,\dots,u^{k-1}v^k$
in $H$.

Observe that $H$ has a path from some $v^0$ to $v^k$
(which will inherently have length exactly $k$)
if and only if there is a (directed) cycle of length $k$
in $G$.
Since $H$ also has $(k+1)n=\OO(n)$ vertices
and $km=\OO(m)$ edges,
the result follows.
\end{proof}

\begin{lemma}
\label{lemma:directed-distance-d-detection-to-k-cycle}
If the $k$-cycle detection problem
on a directed graph can be solved in $\OO(f(n,m,k))$ time,
then the distance-$d$ detection problem (for fixed $d$)
on a \emph{directed} graph with $n$ vertices and $m$ edges can be solved
in $\OO(f(n,m,d))$ time.
\end{lemma}

Note that this direction is slightly stronger
in the sense that we give a reduction from \emph{directed}
distance-$d$ detection
(which involves directed distances).

\begin{proof}
The proof is similar to the last one.
Let $G$ be a directed graph,
and let
$(u_1,v_1),\allowbreak(u_2,v_2),\allowbreak\dots,\allowbreak(u_r,v_r)$
(with $r\in\OO(n)$)
be (oriented) pairs of vertices in $G$,
so that these collectively form an instance of the distance-$d$ detection problem.

Create a new graph $H$ as follows:
For each vertex $v\in V(G)$,
create $d$ vertices $v^1,\dots,v^d$.
For each directed edge $uv\in E(G)$,
create directed edges $u^1v^2,u^2v^3,\dots,u^{d-1}v^d$.
If $v$ appears in some pair $(w,v)$,
then also create a directed edge $u^dw^1$.

We claim that there is some $(u_i,v_i)$ at distance
at most $d$ in $G$ if and only if there is a $d$-cycle
in $H$.

In the forward direction,
let the path be denoted $u_i=w_0,w_1,\dots,w_d=v_i$.
Then the corresponding cycle in $H$
is exactly $w_0^1,w_1^2,\dots,w_{d-1}^d$.
Note that there is an edge in $G$ between $w_{d-1}$ and $w_d=v_i$,
so there is an edge in $H$ between $w_{d-1}^d$ and $w_0^1=u_i^1$.

In the backward direction,
any such cycle must go through exactly one vertex on each layer,
so let the vertices be denoted
$w_0^1,w_1^2,\dots,w_{d-1}^d$.
By definition, there must be some $w_d$ in $G$
for which $(w_0,w_d)$ is among the pairs,
and $w_{d-1}w_d\in E(G)$.
The sequence $w_0,w_1,\dots,w_d$
is therefore a length-$d$ shortest-path between $(w_0,w_d)$ in $G$.
\end{proof}

This second lemma serves as the basis for our partial converses to
\cref{prop:lb-seg-k}
and
\cref{prop:lb-2-rect-k}.
It is also critical that we proved it for \emph{directed} distance-$d$ detection.
This is because, similarly to \cref{sec:sssp}, we can use biclique covers
to reduce the detection of an at-most-size-$k$ solution in point-separation
to an instance of \emph{directed} distance-$d$ detection,
for $d=2k$.
Recall that we use the notation $TC(m,k)$
to denote the time complexity of
the $k$-cycle detection problem
on a (directed) graph with $m$ edges.
We obtain the following results:

\begin{theorem}
\label{thm:small-sln-detect}
Let $\cC$ be a set of $n$ objects.
Let $s$ and $t$ be points.
Let $k$ be a constant.

If $\cC$ is a set of line segments
or $\OO(1)$-complexity polylines,
then it can be checked whether or not $s$ and $t$
are separated by at most $k$
of the objects in $\lO(TC(n^{\frac43}\cdot\text{polylog}(n),k))$ time.

If $\cC$ is a set of rectilinear line segments
or $\OO(1)$-complexity rectilinear polylines,
then it can be checked whether or not $s$ and $t$
are separated by at most $k$
of the objects in $\lO(TC(n\cdot\text{polylog}(n),k))$ time.
\end{theorem}

\begin{proof}
Let $\overline{G}$
be the intersection graph in the homology cover space
of the projected objects $\cC$.
By the same arguments as those in \cref{sec:sssp},
we can reduce the problem of checking whether or not $s$ and $t$
can be separated by at most $k$
of the objects
to an instance of distance-$2k$ detection
using a biclique cover to generate a (sparse) directed graph.

If $\cC$ is a set of line segments
or $\OO(1)$-complexity polylines,
then the resulting graph has $\lO(n^{\frac43})$ edges
(and can be found in this same time complexity).
By
\cref{lemma:directed-distance-d-detection-to-k-cycle},
we can then solve the problem in $\lO(TC(n^\frac43\cdot\text{polylog}(n),k))$ time.

If $\cC$ is a set of rectilinear line segments
or $\OO(1)$-complexity rectilinear polylines,
then the resulting graph has $\lO(n)$ edges
(and can be found in this same time complexity).
By
\cref{lemma:directed-distance-d-detection-to-k-cycle},
we can then solve the problem in $\lO(TC(n\cdot\text{polylog}(n),k))$ time.
\end{proof}

By applying the result of Alon, Yuster and Zwick~\cite{ayz95},
we can also obtain a statement with precise time complexities:

\begin{corollary}
\label{cor:small-sln-detect}
Let $\cC$ be a set of $n$ objects.
Let $s$ and $t$ be points.
Let $k$ be a constant.

If $\cC$ is a set of line segments
or $\OO(1)$-complexity polylines,
then it can be checked whether or not $s$ and $t$
are separated by at most $k$
of the objects in $\lO(n^{\frac43(2-\frac1k)})$ time.

If $\cC$ is a set of rectilinear line segments
or $\OO(1)$-complexity rectilinear polylines,
then it can be checked whether or not $s$ and $t$
are separated by at most $k$
of the objects in $\lO(n^{2-\frac1k})$ time.
\end{corollary}

\paragraph*{Upper and lower bounds for multiplicative approximations}

We have shown some careful reductions between point-separation problems
(specifically, checking for solutions of at most a certain fixed size)
and the $k$-cycle detection problem in directed graphs.
When combined with our main additive approximations from earlier in the paper,
these results also imply both upper and lower bounds for multiplicative approximations,
as we will show in this subsection.

We start with the lower bound:
\begin{theorem}
Let $k$ be a constant.
If there were
a multiplicative $(1+\frac1{k+1})$-approximation for
point-separation among $n$ line segments
or $2$-complexity rectilinear polylines
running in $\OO(T(n))$ time,
then
we would be able to solve
the directed $k$-cycle detection problem
in a graph with $m$ edges
in $\OO(T(m))$.
\end{theorem}

\begin{proof}
Note that,
if the optimum solution to an (unweighted) point-separation instance
has size at most $k$,
then a $(1+\frac1{k+1})$-approximation
is exactly the optimal solution, always.
Therefore, the result follows from
\cref{prop:lb-seg-k}
and
\cref{prop:lb-2-rect-k}.
\end{proof}

The upper bound is as follows:
\begin{theorem}
Let $\cC$ be a set of $n$ objects.
Let $s$ and $t$ be points.
Let $k\geq3$.

If $\cC$ is a set of line segments
or $\OO(1)$-complexity polylines,
then we can give a randomized Monte Carlo $(1+\frac1k)$-approximation
(succeeding w.h.p.)
to point-separation running in
$\lO(n^{\frac{11}6}+TC(n^{\frac43}\cdot\text{polylog}(n),k))$
time.

If $\cC$ is a set of rectilinear line segments or
$\OO(1)$-complexity rectilinear polylines,
then we can give a randomized Monte Carlo $(1+\frac1k)$-approximation
(succeeding w.h.p.)
to point-separation running in
$\lO(n^{\frac32}+TC(n\cdot\text{polylog}(n),k))$
time.
\end{theorem}

\begin{proof}
Apply the appropriate result from
\cref{thm:line-seg-convex-poly-main}
and
\cref{thm:axis-aligned-rectangle-main},
to get an additive $+1$ approximation to the problem.
We therefore have a $(\frac{k+1}k)=(1+\frac1k)$-approximation
if the solution has size at least $k$.
To get this multiplicative factor in general,
we must also detect solutions strictly smaller than size $k$.
We use
\cref{thm:small-sln-detect}
to accomplish this.
\end{proof}

In particular, since
the max-$3$-SAT hypothesis
implies that $\lim_{n\to\infty}TC(n,k)\geq\Omega(n^{\frac32-\varepsilon})$
for all $\varepsilon>0$,
we have the following result:

\begin{corollary}
Under the max-$3$-SAT hypothesis,
if $T_{\text{PS}}(n,1+\frac1k)$ is the time
to give a $(1+\frac1k)$-approximation for point-separation among $\OO(1)$-complexity polylines,
then $\lim_{k\to\infty}f_1(k)\cdot TC(n,k)\lesssim
\lim_{k\to\infty}T_{\text{PS}}(n,1+\frac1k)\lesssim
\lim_{k\to\infty}f_2(k)\cdot TC(n\cdot\text{polylog}(n),k)$
for some functions $f_1,f_2$,
where $\lesssim$ here hides sub-polynomial factors.
\end{corollary}

\appendix

\end{document}